\newtheorem{prop}{Proposition}
\newtheorem{lem}{Lemma}
\newcommand{\I}{\mathbb{I}}
\newcommand{\bC}{\mathbb{C}}
\newcommand{\cH}{\mathcal{H}}
\newcommand{\tA}{\tilde{A}}
\newcommand{\tB}{\tilde{B}}
\newcommand{\tpsi}{\tilde{\psi}}
\newcommand{\junk}{\text{junk}}
\def \diracspacing {0.7pt}
\newcommand{\bra}[1]{\langle #1 \hspace{\diracspacing} |} 
\newcommand{\ket}[1]{| \hspace{\diracspacing} #1 \rangle} 
\newcommand{\braket}[2]{\langle #1 \hspace{\diracspacing} | \hspace{\diracspacing} #2 \rangle} 
\newcommand{\ketbra}[2]{| \hspace{\diracspacing} #1 \rangle \langle #2 \hspace{\diracspacing} |} 
\newcommand{\ketbraq}[1]{\ketbra{#1}{#1}} 
\DeclareMathOperator{\tr}{tr}
\newcommand{\norm}[2][]{#1\left| \! #1\left| #2 #1\right| \! #1\right|}
\newcommand{\abs}[2][]{#1| #2 #1|}
\begin{document}

\preprint{APS/123-QED}

\title{Extending quantum correlations to arbitrary distances via parallel repetition of routed Bell tests}

\author{Anubhav Chaturvedi}
\email{anubhav.chaturvedi@pg.edu.pl}
\affiliation{Faculty of Applied Physics and Mathematics,
             Gdańsk University of Technology,
             Gabriela Narutowicza 11/12, 80‑233 Gdańsk, Poland}
\affiliation{International Centre for Theory of Quantum Technologies (ICTQT),
             University of Gdańsk, 80‑308 Gdańsk, Poland}

\author{Marcin Pawłowski}
\affiliation{International Centre for Theory of Quantum Technologies (ICTQT),
             University of Gdańsk, 80‑308 Gdańsk, Poland}

\noaffiliation            

\author{Máté Farkas}
\email{mate.farkas@york.ac.uk}
\affiliation{Department of Mathematics, University of York,
             Heslington, York YO10 5DD, United Kingdom}

\date{\today}
\begin{abstract}
Applications such as Device-Independent Quantum Key Distribution (DIQKD) require loophole-free certification of long-distance quantum correlations. However, these distances remain severely constrained by detector inefficiencies and unavoidable transmission losses. To overcome this challenge, we consider parallel repetitions of the recently proposed routed Bell experiments, where transmissions from the source are actively directed either to a nearby or a distant measurement device. We analytically show that the threshold detection efficiency of the distant device--needed to certify non-jointly-measurable measurements, a prerequisite of secure DIQKD--decreases exponentially, optimally, and robustly, following $\eta^*=1/2^N$, with the number $N$ of parallel repetitions.
\end{abstract}

\maketitle
\section{Introduction}
Quantum theory enables spatially separated parties to share nonlocal correlations, powering several classically impossible applications \cite{Bell1964,Brunner2014review}. One of the most notable examples of such applications is device-independent quantum key distribution (DIQKD), which enables distant parties to establish a secure cryptographic key without the need to trust the devices used \cite{Ekert1991, mayers1998quantum, BHK05, Acin2007, PABGMS09}. To employ DIQKD in practice, nonlocal correlations over long distances are needed. 
The most persistent challenge in the experimental realization of long-range nonlocal correlations stems from limited efficiency of particle detectors and the inevitable losses during the transmission of quantum systems \cite{Pearle1970,PhysRevD.10.526}. Certifying nonlocal quantum correlations requires the overall efficiency $\eta$ (accounting for all losses and detector imperfections) to exceed a certain threshold $\eta^*$. 
This threshold is typically very high, severely limiting the distances over which nonlocal correlations
can be reliably certified. 

Several approaches have been proposed to lower the critical detection efficiency requirement \cite{Eberhard1993,Massar2002,Massar2003,PhysRevLett.104.060401,Vertesi2010,XuLatest}. In this work, we focus on two of the most promising recently proposed methods—parallel repetition \cite{miklin2022exponentially} and routed Bell experiments \cite{chaturvedi2024extending}. In parallel repetition, a single pair of particles encodes $N$ entangled states \cite{Kwiat01111997}, allowing the simultaneous violation of $N$ copies of a Bell inequality in parallel. This approach can drive $\eta^*$ down in proportion to $\big(\frac{C}{Q}\big)^N$, where $C$ is the classical bound and $Q>C$ is the optimal quantum value of the Bell inequality in question. In the limit of infinite repetitions this leads to a ``Bell violation in a single shot'', since $\eta^*\to0$ as $N\to\infty$ \cite{miklin2022exponentially,araujo2020bell}. However, the actual decay rate is often suboptimal
and significant gains generally require large $N$ \cite{miklin2022exponentially}.

In routed Bell experiments \cite{chaturvedi2024extending}, depicted in FIG. \ref{routedSetup},  Alice's device $A$ remains near the source while Bob switches between a nearby device $B_0$ and a distant one $B_1$, the latter of which has an overall efficiency $\eta<1$. This setup resembles the DI protocols with a ``local Bell test" \cite{PhysRevX.3.031006, PhysRevLett.133.120803}. In routed Bell experiments, strong nonlocal correlations established close to the source (between $A$ and  $B_0$) help characterize Alice's measurement device and the source, thereby lowering the critical efficiency $\eta^*$ of the distant device $B_1$. 

If the devices near the source are perfect, the impossibility of describing $B_1$ in terms of local hidden variables-can be certified at arbitrary low detection efficiency \cite{chaturvedi2024extending}. However, this is not enough to ensure security of DIQKD between $A$ and $B_1$ against an active eavesdropper, as  for secure DIQKD, we need to ensure that the measurements in $B_1$ are non-jointly measurable (NJM) \cite{chaturvedi2024extending,Lobo2024certifyinglongrange,PhysRevLett.133.120803}. 
An eavesdropper could intercept the transmission to $B_1$, perform a joint measurement, then forward the results to $B_1$, making $B_1$'s apparent randomness known to the eavesdropper (see FIG. \ref{fig:enter-label2}). Therefore, to guarantee the security of DIQKD based on routed Bell experiments, the correlations between $A$ and $B_1$ must be incompatible with any jointly measurable (JM) model at $B_1$. In other words, the measurements of $B_1$ must be certifiably NJM \cite{Lobo2024certifyinglongrange,Masini2024jointmeasurability}.

Correlations explainable and incompatible with a JM model at $B_1$ are termed short-range quantum (SRQ) and long-range quantum (LRQ)  correlations, respectively, in \cite{Lobo2024certifyinglongrange}. Moreover, SRQ correlations are equivalently those that are compatible with an entanglement-breaking channel between the switch and $B_1$. In the case of binary measurements, Lobo, Pauwels and Pironio showed that the critical detection efficiency for certifying NJM measurements in $B_1$ reduces to $\eta^*=\frac{1}{2}$ \cite{Lobo2024certifyinglongrange} in a routed scenario. 
 Unlike the standard CHSH Bell scenario, where approaching (but not reaching) $\eta^* = \frac{1}{2}$ demands a fragile, near-product two-qubit state \cite{Eberhard1993,Eberhard1995,gigena2024robustselftestingbellinequalities}, routed Bell experiments achieve $\eta^*=\frac{1}{2}$ using the maximally entangled two-qubit state \cite{Lobo2024certifyinglongrange}. Remarkably, this leads to $\eta^*=\frac12$ for secure DIQKD based on routed Bell experiments \cite{PhysRevLett.133.120803,roydeloison2024deviceindependent}.

Two distinct strategies achieve $\eta^*=1/2$, and we consider their $N$-fold parallel repetitions. Building on self-testing statements for the maximal quantum violation of $N$-product CHSH inequalities \cite{Col17,McKague17,Supic2021deviceindependent} between $A$ and $B_0$, we show analytically that the critical detection efficiency for certifying NJM measurements in $B_1$ drops exponentially with $N$, specifically $\eta^*=1/2^N$. This scaling is tight, since $B_1$ has $2^N$ settings, and therefore the critical detection efficiency for certification of NJM measurements in $B_1$ and hence for secure DIQKD cannot be lower than $\eta^*\geq 1/2^N$ \cite{chaturvedi2024extending,LPP24,roydeloison2024deviceindependent}. We further demonstrate that the robustness of the self-testing statements 
translates into robustness for our results. Finally, we use non-commutative polynomial optimization (NPO) \cite{Navascues2007,Navas_2008,PNA10}, to demonstrate that $2$-fold parallel repeated strategies exhibit increased resilience to imperfections in the source as compared to their single copy counterparts.

\begin{figure}
    \centering
    \includegraphics[width=\linewidth]{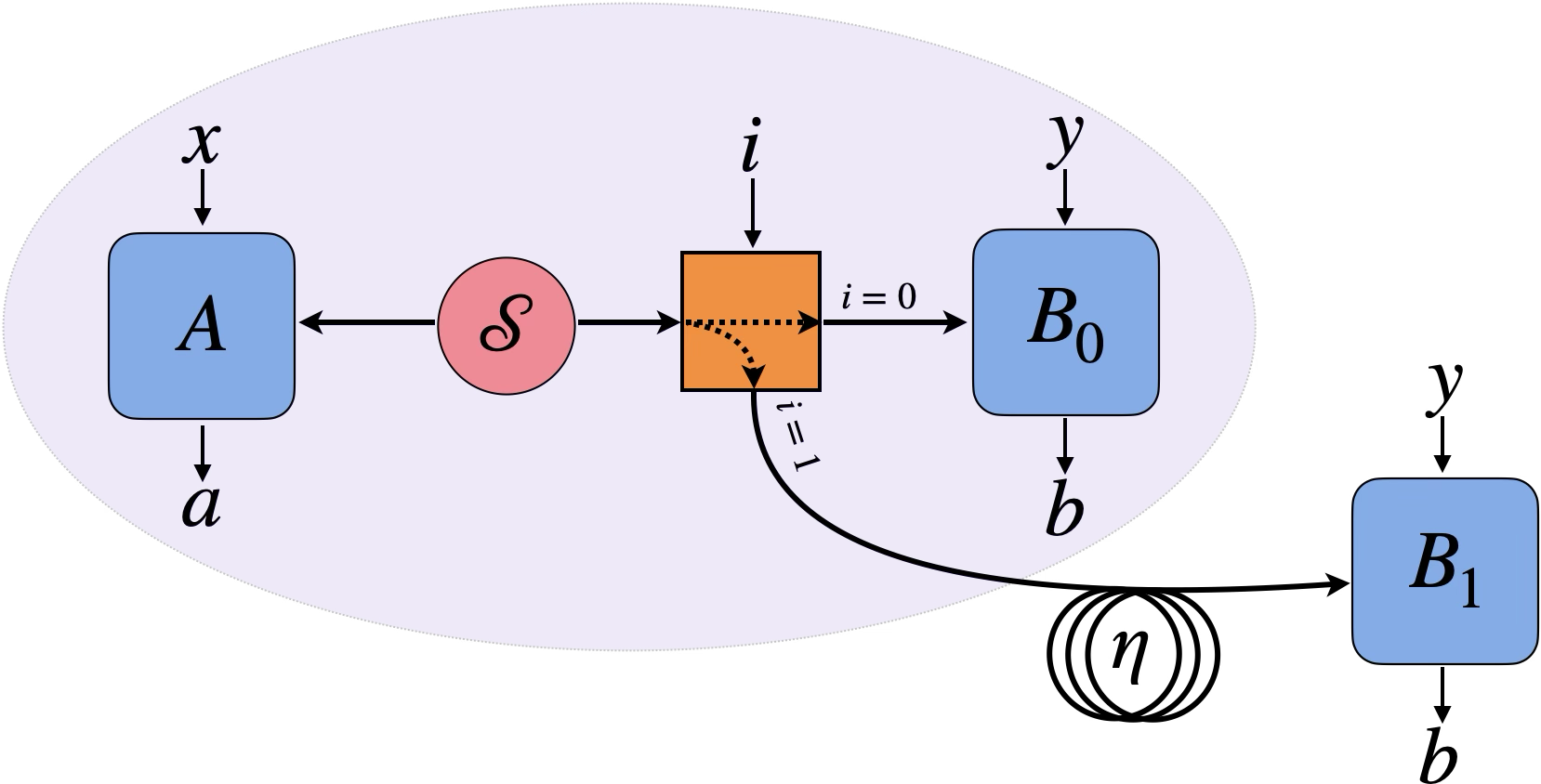}
    \caption{\emph{Routed Bell experiments}: A source distributes entangled particles to Alice ($A$) and Bob. A switch redirects Bob's subsystem depending on the input bit $i$  to either Bob's measurement device $B_0$ located near the source or to $B_1$ placed away from the source, such that the effective detection efficiency at $B_1$ is $\eta<1$. }
    \label{routedSetup}
\end{figure}
\section{Prelminaries}
In this section, we revisit the key concepts relevant to our discussion: routed Bell experiments, non-jointly measurable (NJM) measurements, self-testing statements, and the $r\text{BB84}$ and $r\text{CHSH}$ strategies for routed Bell experiments. 

In each round of a routed Bell experiment \cite{chaturvedi2024extending} (see FIG. \ref{routedSetup}), the source distributes the state $\rho_{AB}$ to Alice and Bob. Based on her input $x$, Alice's performs a measurement described by a POVM $\{A^x_a\}_a$. Based on $i$, Bob's subsystem is routed to the measurement device $B_i$. Based on his input $y$, Bob's device $B_i$ performs a measurement described by a POVM $\{B^{y,i}_b\}_b$. The three tuple $(\rho_{AB},\{A^x_a\}_{a,x},\{B^{y,i}_b\}_{b,y,i})$ describes a quantum strategy for the routed Bell experiment. As a result the parties observe correlations described by the conditional probability distributions
\begin{equation}    p(a,b|x,y,i)= \tr(\rho_{AB}A^x_a\otimes B^{y,i}_b).
\end{equation}

To ensure security of DIQKD schemes based on routed Bell experiments, correlations between $A$ and $B_1$ must not be explainable via jointly measurable (JM) measurements in $B_1$ \cite{Masini2024jointmeasurability}. Specifically, the measurements in $B_1$ are JM if there exists a joint measurement described by the POVM $\{E_{\vec{b}}\}_{\vec{b}}$ with outcomes $\vec{b}\equiv (b_y)_y$, specifying the outcomes of $B_1$ in each round, that is
\begin{equation} \label{JM}
        p(a,b|x,y,1)\underset{\text{JM}({B_1})}{=} \sum_{\vec{b} }\delta_{b_y,b} \tr(\rho_{AB}A^x_a\otimes E_{\vec{b}}).
\end{equation}
If correlations between $A$ and $B_1$ can explained via JM measurements in $B_1$, as in the above equation, then an eavesdropper could intercept the transmission from the switch to $B_1$,
perform the parent measurement $\{E_{\vec{b}}\}_{\vec{b}}$, and forward the outcome $\vec{b}$ to dictate the outcomes of $B_1$, outputting $b = b_y$ whenever the input is $y$ (see FIG. \ref{fig:enter-label2}).
Correlations not captured by \eqref{JM}, certify that the measurements in $B_1$ are NJM, a pre-requisite of secure DIQKD.
\begin{figure}
    \centering
    \includegraphics[width=\linewidth]{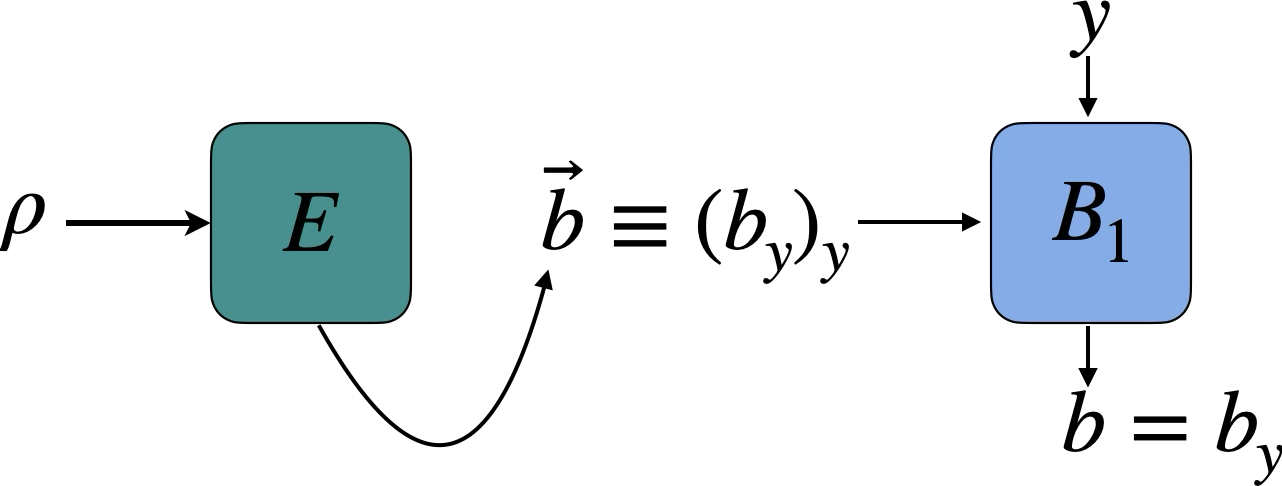}
    \caption{\emph{Active attack exploiting joint-measurability}: If the correlations between $A,B_1$ can be explained with jointly-measurable (JM) measurements in $B_1$ \eqref{JM}, then an eavesdropper $E$ can intercept the quantum signal $\rho$ intended for $B_1$, perform the joint (parent) measurement described by the POVM $\{E_{\vec{b}}\}_{\vec{b}}$, and transmit the classical outcome $\vec{b}=(b_y)_{y}$ to $B_1$. Upon receiving $\vec{b}$, $B_1$ then classically post-processes it to produce $b=b_y$ based on Bob's input $y$. Therefore, to ensure security of DIQKD against such an eavesdropper $B_1$ measurements must be certifiably non-jointly measurable (NJM).}
    \label{fig:enter-label2}
\end{figure}

In the simplest routed Bell experiment, with $x,i,y,a\in\{0,1\}$ and $b\in\{0,1,\emptyset\}$, where $\emptyset$ denotes the no-click event, the critical detection efficiency for the certification of NJM measurements in $B_1$ is $\eta^*=\frac12$, using a maximally entangled two-qubit state \cite{Lobo2024certifyinglongrange}. In particular, two distinct strategies, termed $r\text{BB84}$ and $r\text{CHSH}$, achieve this feat. In both cases, the devices close to the source, $A$ and $B_0$, witness the maximal violation of the CHSH inequality \cite{Cirelson1980}, that is,
\begin{align}\label{TsirelsonsB} \nonumber
    \mathcal{C}_{AB_0}&=\frac{1}{4}\sum_{x,y,a,b=0}^1\delta_{x\cdot y
    = a\oplus b}p(a,b|x,y,0)\\
    &=\frac{1}{2}\left(1+\frac{1}{\sqrt{2}}\right)=\alpha.
\end{align}
The maximal violation of the CHSH inequality \eqref{TsirelsonsB} self-tests the optimal quantum strategy $(\ket{\phi^+}_{AB},\{A^x_a\},\{B^y_b\})$ \cite{McKague_2012}, which means that, for any quantum realization $(\tilde{\rho}_{\tA \tB},\{\tilde{A}^x_a\},\{\tilde{B}^{y,0}_b\})$ on Hilbert spaces $\cH_{\tA}$ and $\cH_{\tB}$ achieving \eqref{TsirelsonsB}, there exist Hilbert spaces $\cH_{A'}$ and $\cH_{B'}$ and isometries $V_{\tilde{A}} : \cH_{\tA} \to \mathbb{C}^2 \otimes \cH_{A'}$ and $V_{\tilde{B}} : \cH_{\tB} \to \mathbb{C}^2 \otimes \cH_{B'}$ such that for every purification $\ket{\tilde{ \psi } }_{\tA \tB P} \in \cH_{\tA} \otimes \cH_{\tB} \otimes \cH_P$ of $\tilde{\rho}_{\tA \tB}$ we have
\begin{align} \label{CHSHselfTest}
(V_{\tilde{A}} \otimes V_{\tilde{B}} \otimes \I_P) (\tA^x_a \otimes \tB^y_b \otimes \I_P ) \ket{\tilde{\psi}_{\tA \tB P}} \\ \nonumber
= (A^x_a \otimes B^y_b) \ket{\phi^+} \otimes \ket{\junk}_{A'B'P}
\end{align}
for some state $\ket{\junk}_{A'B'P}$,
where $\ket{\phi^+}=\frac{1}{\sqrt{2}}(\ket{00}+\ket{11})$ and
\begin{align} \label{optMeasCHSH}
    A^x_a&=\frac{1}{2}\left[ \mathbb{I}_2+(-1)^a(\delta_{x,0}Z+\delta_{x,1}X)\right] \\ \nonumber
B^{y}_b&=\frac{1}{2}\left[ \mathbb{I}_2+\frac{(-1)^b}{\sqrt{2}}\bigg(\delta_{y,0}(X+Z)+\delta_{y,1}(X-Z)\bigg)\right],
\end{align}
where $\mathbb{I}_2,Z,X$ are the identity operator on $\mathbb{C}^2$ and the Pauli Z and X operators, respectively. 

The two strategies differ in the measurements for the distant device $B_1$. In the $r\text{BB84}$ strategy, $B_1$ copies Alice's device $A$ and measures the Pauli $Z$ and $X$ observables, that is, $B^{y,1}_b=A^{y}_{b}$ from Eq.~\eqref{optMeasCHSH}. When $\eta=1$, this strategy achieves maximum value of the BB84 inequality, that is
\begin{equation}\label{BB84game}
    \mathcal{B}_{AB_1}=\frac{1}{2}\sum_{x,y,a,b}\delta_{x,y}\delta_{a,b}p(a,b|x,y,1)=1.
\end{equation}
In the $r\text{CHSH}$ strategy, $B_1$ mimics Bob's device $B_0$ and measures the $X+Z$ and $X-Z$ observables, that is, $B^{y,1}_b=B^y_b$ from Eq.~\eqref{optMeasCHSH}. 
When $B_1$ operates perfectly, this strategy maximally violates the CHSH inequality between $A$ and $B_1$, that is,  $\text{CHSH}_{AB_1}=\alpha$. When the devices close to the source are perfect and $B_1$ clicks with efficiency $\eta<1$, keeping the no-click $\emptyset$ event as an additional outcome, the critical detection efficiency of both $r\text{BB84}$ and $r\text{CHSH}$ for the certification of NJM measurements in $B_1$ turns out to be $\eta^*=\frac{1}{2}$, which is tight since $\eta^*\geq 1/2$ for a device with two settings. Remarkably, for these strategies the threshold detection efficiency for a positive key rate in DIQKD also turns out to be $\eta^*\approx 1/2$ \cite{roydeloison2024deviceindependent}.

In the next section, we present the $N$-fold parallel repeated versions of the $r\text{BB84}$ and $r\text{CHSH}$ strategies. We then demonstrate that these strategies lead to exponentially and optimally decreasing critical detection efficiency, $\eta^*=\frac{1}{2^N}$, for the certification of NJM measurements in $B_1$.

\section{Parallel repeated routed Bell experiments}
In this section, we present the $N$-fold parallel repeated versions of the $r\text{BB84}$ and $r\text{CHSH}$  strategies for routed Bell experiments, termed $r\text{BB84}^{N}$ and $r\text{CHSH}^{N}$, respectively. 

Consider a Bell experiment where Alice and Bob both have $N$-bit strings $\bm{x}\equiv (x_j)^N_{j=1},\bm{y}\equiv (y_j)^N_{j=1}\in\{0,1\}^N$ as inputs and produce $N$-bit strings $\bm{a}\equiv (a_j)^N_{j=1},\bm{b}\equiv (b_j)^N_{j=1}\in\{0,1\}^N$ as outputs. In the routed version, based on an additional input bit $i\in\{0,1\}$, Bob decides whether to direct the transmission from the source to his measurement device $B_0$ or to $B_1$. 

In both the strategies, the source prepares a $2^{N}\times2^{N}$ dimensional state $\ket{\phi^+}^{\otimes N}_{AB}$, and the devices close to the source,  $A$ and $B_0$, perform the sharp measurements $\{A^{\bm{x}}_{\bm{a}}\},\{B^{ \bm{y},0 }_{ \bm{b} } = B^{ \bm{y} }_{ \bm{b} }\}$, respectively, where
\begin{align}\label{CHSHNmeas}
A^{\bm{x}}_{\bm{a}}=\bigotimes^N_{j=1}A^{x_j}_{y_j}, \ \ \ \ B^{\bm{y}}_{\bm{b}}=\bigotimes^N_{j=1} B^{y_j}_{b_j},
\end{align}
where $\{A^{x_j}_{a_j}\},\{B^{y_j}_{b_j}\}$ are two-dimensional projections defined in \eqref{optMeasCHSH}. This strategy attains the maximal violation of the $N$-product CHSH inequality, that is,
\begin{align}\label{CHSHNTB}
\mathcal{C}^{ N}_{AB_0}&=\frac{1}{2^{2N}}\sum_{\bm{x},\bm{y},\bm{a},\bm{b}}\delta_{\bm{x}\cdot\bm{y}=\bm{a}\oplus\bm{b}}p(\bm{a},\bm{b}|\bm{x},\bm{y},0)=\alpha^N,
\end{align}
where $\delta_{\bm{x}\cdot\bm{y}=\bm{a}\oplus\bm{b}}=\Pi^N_{j=1}\delta_{x_j\cdot y_j=a_j\oplus b_j }$.
Just like the CHSH inequality, the maximal violation of the $N$-product CHSH inequality \eqref{CHSHNTB} self-tests the optimal quantum strategy $(\ket{\phi^+}^{\otimes N}_{AB},\{A^{\bm{x}}_{\bm{a}}\},\{B^{\bm{y}}_{\bm{b}}\})$ \cite{Col17,McKague17,Supic2021deviceindependent}. In particular, \cite{Supic2021deviceindependent} implies that for any quantum realization $(\tilde{\rho}_{\tA \tB},\{\tilde{A}^{\bm{x}}_{\bm{a}}\},\{\tilde{B}^{\bm{y},0}_{\bm{b}}\})$
on Hilbert spaces $\cH_{\tA}$ and $\cH_{\tB}$ achieving \eqref{CHSHNTB}, there exist Hilbert spaces $\cH_{A'}$ and $\cH_{B'}$ and isometries $V_{\tilde{A}} : \cH_{\tA} \to \mathbb{C}^{2^N} \otimes \cH_{A'}$ and $V_{\tilde{B}} : \cH_{\tB} \to \mathbb{C}^{2^N} \otimes \cH_{B'}$ such that for every purification $\ket{\tilde{ \psi } }_{\tA \tB P} \in \cH_{\tA} \otimes \cH_{\tB} \otimes \cH_P$ of $\tilde{\rho}_{\tA \tB}$ we have
\begin{align} \label{CHSHNselfTest}
(V_{\tilde{A}} \otimes V_{\tilde{B}} \otimes \I_P) (\tA^{\bm{x}}_{\bm{a}} \otimes \tB^{\bm{y},0}_{\bm{b}} \otimes \I_P ) \ket{\tilde{\psi}_{\tA \tB P}} \\ \nonumber
= (A^{\bm{x}}_{\bm{a}} \otimes B^{\bm{y}}_{\bm{b}}) \ket{\phi^+}^{\otimes N} \otimes \ket{\junk}_{A'B'P}
\end{align}
for some state $\ket{\junk}_{A'B'P}$, where the $2^N$-dimensional projections $A^{\bm{x}}_{\bm{a}}$ and $B^{\bm{y}}_{\bm{b}}$ are defined in \eqref{CHSHNmeas}. In the following, for simplicity we use the notation $\cH_A = \cH_B = \mathbb{C}^{2^N}$. Note that by summing over $\bm{b}$ and $\bm{a}$, Eq.~\eqref{CHSHNselfTest} implies the relations
\begin{equation} \label{eq:CHSHN_selftest_sumab}
\begin{split}
(V_{\tilde{A}} & \left. \otimes V_{\tilde{B}} \otimes \I_P) (\tA^{\bm{x}}_{\bm{a}} \otimes \I_{\tB} \otimes \I_P ) \ket{\tilde{\psi}_{\tA \tB P}} \right. \\
& \left. = (A^{\bm{x}}_{\bm{a}} \otimes \I_B ) \ket{\phi^+}^{\otimes N} \otimes \ket{\junk}_{A'B'P} \right. \\
(V_{\tilde{A}} & \left. \otimes V_{\tilde{B}} \otimes \I_P) \ket{\tilde{\psi}_{\tA \tB P}} = \ket{\phi^+}^{\otimes N} \otimes \ket{\junk}_{A'B'P}
\right.
\end{split}
\end{equation}

Yet again, the two strategies, $r\text{BB84}^N$ and $r\text{CHSH}^N$, differ in the measurements for the distant device $B_1$. In the $r\text{BB84}^N$ strategy, $B_1$ copies Alice's device $M_1$ and measures products of Pauli $Z$ and $X$ observables, that is,
\begin{equation}
    B^{\bm{y},1}_{b}= A^{\bm{y}}_{\bm{b}},
\end{equation}
where the $A^{\bm{y}}_{\bm{b}}$ are $2^N$-dimensional projectors from \eqref{CHSHNmeas}. When $\eta=1$, the $r\text{BB84}^N$ strategy achieves the maximum value of the $N$-product BB84 inequality, 
\begin{equation}\label{BB84Ngame}
    \mathcal{B}^{ N}_{AB_1}=\frac{1}{2^N}\sum_{\bm{x},\bm{y},\bm{a},\bm{b}}\delta_{\bm{x}=\bm{y}}\delta_{\bm{a}=\bm{b}}p(\bm{a},\bm{b}|\bm{x},\bm{y},1)=1,
\end{equation}
where $\delta_{\bm{x}=\bm{y}}\delta_{\bm{a}=\bm{b}}=\Pi^N_{j=1}\delta_{x_j=y_j}\delta_{a_j=b_j}$.
In the $r\text{CHSH}^N$ strategy, $B_1$ performs the same measurements as $B_0$, that is, $B^{\bm{y},1}_b=B^{\bm{y}}_{\bm{b}}$, where $B^{\bm{y}}_{\bm{b}}$ are $2^N$-dimensional projectors from \eqref{CHSHNmeas}. When $B_1$ is perfect, the $r\text{CHSH}^N$ strategy attains the maximal quantum violation of the $N$-product CHSH inequality, that is, $\mathcal{C}^N_{AB_1}=\alpha^N$.

Since in both strategies the device $B_1$ has $2^N$ measurement settings, $\eta^*\geq 1/2^N$ for the certification of NJM measurements and hence, for secure DIQKD. In the next section, we demonstrate that this bound can be achieved with 
both $r\text{BB84}^N$ and $r\text{CHSH}^N$ strategies, that is, when the devices $A$ and $B_0$ are perfect, NJM measurements in $B_1$ can be certified whenever $\eta>\eta^*=1/2^N$.  

\section{Exponentially and tightly decreasing critical detection efficiencies}

In this section, we derive our main results, namely that NJM measurements in $B_1$ can be certified with the strategies $r\text{BB84}^N$ and $r\text{CHSH}^N$ whenever $\eta>1/2^N$, given the devices $A$ and $B_0$ are perfect. 

In both cases, the perfect devices $A$ and $B_0$ witness the maximal quantum violation of the $N$-product CHSH inequality \eqref{CHSHNTB}, which self-tests the reference quantum strategy \eqref{CHSHNselfTest}. From the self-testing statements we only need to carry forward two
implications. First, the self-testing statements \eqref{CHSHNselfTest} imply that Alice's marginals must be uniformly random, that is
\begin{equation} \label{uniformMarg}
    p_A(\bm{a}|\bm{x})=\frac{1}{2^N} \quad \forall \bm{a}, \bm{x}.
\end{equation}
Additionally, Eq.~\eqref{eq:CHSHN_selftest_sumab} lets us infer Bob's states $\{\tilde{\rho}^{(\tilde{B})}_{\bm{a}|\bm{x}}\}$, remotely prepared by Alice measuring $\bm{x}$ and observing the outcome $\bm{a}$, up to a local isometry $V_{\tilde{B}}$
\begin{align} \label{ImpImplication} \nonumber
& V_{\tilde{B}} \tilde{\rho}^{(\tilde{B})}_{\bm{a}|\bm{x}} V_{\tilde{B}}^\dagger = \frac{V_{\tilde{B}} \tr_{\tA P} \left[ \ketbraq{\tpsi} ( \tilde{A}^{\bm{x}}_{\bm{a}} \otimes \mathbb{I}_{\tilde{B}} \otimes \I_P ) \right] V_{\tilde{B}}^\dagger}{p_A(\bm{a}|\bm{x})}, \\ \nonumber
&=2^N \tr_{\tA P} \left[(\mathbb{I}_A\otimes V_{\tilde{B}} \otimes \I_P ) \ketbraq{\tpsi} (\tilde{A}^{\bm{x}}_{\bm{a}}\otimes V_{\tilde{B}}^\dagger \otimes \I_P )\right] \\ \nonumber
&=2^N \tr_{\tA P} \left[( V_{\tilde{A}}^\dagger V_{\tilde{A}} \otimes V_{\tilde{B}} \otimes \I_P ) \ketbraq{\tpsi} (\tilde{A}^{\bm{x}}_{\bm{a}}\otimes V_{\tilde{B}}^\dagger \otimes \I_P )\right] \\ \nonumber
&=2^N \tr_{A A' P} \left[( V_{\tilde{A}} \otimes V_{\tilde{B}} \otimes \I_P ) \ketbraq{\tpsi} (\tilde{A}^{\bm{x}}_{\bm{a}} V^\dagger_A \otimes V_{\tilde{B}}^\dagger \otimes \I_P )\right] \\ \nonumber
&=2^N \tr_{A A' P} \left[ \ketbraq{\phi^+}^{\otimes N} (A^{\bm{x}}_{\bm{a}} \otimes \I_B) \otimes \ketbraq{ \junk }_{A'B'P } \right] \\ \nonumber
&=2^N \tr_{A} \left[ \ketbraq{\phi^+}^{\otimes N} (\I_A \otimes A^{\bm{x}}_{\bm{a}})\right] \otimes \sigma_{B'} \\ 
& =A^{\bm{x}}_{\bm{a}} \otimes \sigma_{B'},
\end{align}
where we used the notation $\sigma_{B'} = \tr_{A'P} \ketbraq{\junk}_{A'B'P}$, for the second equality we used \eqref{uniformMarg}, for the third equality we used $V_{\tilde{A}}^\dagger V_{\tilde{A}}=\mathbb{I}_{\tA}$, for the fourth equality we used the cyclicity of the partial trace, for the fifth equality we used the self-testing statements \eqref{eq:CHSHN_selftest_sumab}, for the sixth equality we used the property  $(A^{\bm{x}}_{\bm{a}} \otimes \mathbb{I}) \ket{\phi^+}^{\otimes N}=\left(\mathbb{I} \otimes  (A^{\bm{x}}_{\bm{a}})^T\right) \ket{\phi^+}^{\otimes N}$ and omitted the transpose since $A^{\bm{x}}_{\bm{a}}$ are real \eqref{CHSHNmeas}, and for the seventh equality we used $\tr_A(\ketbraq{\phi^+}^{\otimes N})=\frac{\mathbb{I}}{2^N}$. Notice that \eqref{uniformMarg} and \eqref{ImpImplication} only involve Alice's marginal probabilities and measurement operators.

Let us now consider imperfect detectors in $B_1$ which sometimes fail to click resulting in a no-click event $\emptyset$, that is, the outcome set of $B_1$ is $\bm{b}\in\{0,1\}^N\cup\{\emptyset\}$. We assume that the detection efficiency is the same for all measurements in $B_1$ and therefore $p(\bm{b}=\emptyset|\bm{y})=\eta$. We first consider the $r\text{BB84}^N$ strategy followed by $r\text{CHSH}^N$.

\subsection{$\eta^*=1/2^N$ with $r\text{BB84}^N$}
To demonstrate that the critical detection efficiency of $B_1$ for the certification of NJM measurements in $B_1$ with the $r\text{BB84}^N$ strategy is $1/2^N$, we consider a penalized version of the $N$-product BB84 inequality \eqref{BB84Ngame} for $A$ and $B_1$,
\begin{equation}\label{BB84Nqgame}
    \mathcal{B}^{ N}_{AB_1}(q)= \mathcal{B}^{ N}_{AB_1}-\frac{q}{2^N}\sum_{\bm{y},\bm{b}\in\{0,1\}^N}p_{B}(\bm{b}|\bm{y},1),
\end{equation}
where $q\in[0,1]$ is the penalty parameter. This inequality is a generalization of the long-path inequality in Eq. $(63)$ of \cite{Lobo2024certifyinglongrange}. Effectively, $A$ and $B_1$ are penalized whenever $B_1$ clicks, that is, whenever $\bm{b}\neq\emptyset$. The $r\text{BB84}^N$ strategy with $B_1$ clicking with efficiency $\eta$ achieves the following value of the $q$-penalized $N$-product BB84 inequality \eqref{BB84Nqgame},
\begin{equation} \label{BB84idealScore}
    \mathcal{B}^{ N}_{AB_1}(q)=(1-q)\eta.
\end{equation}

Next, via the following proposition, we derive an upper bound on the maximal value of $\mathcal{B}^{N}_{AB_1}(q)$ achievable with JM measurements in $B_1$ to compare against \eqref{BB84idealScore}. 

\begin{prop} \label{BB84prop}
When the devices $A$ and $B_0$ witness the maximal quantum violation of the $N$-product CHSH inequality \eqref{CHSHNTB}, the maximal achievable value of the $q$-penalized $N$-product BB84 inequality \eqref{BB84Nqgame} with jointly-measurable (JM) measurements in $B_1$ satisfies
\begin{equation} \label{BB84maxJM}
    \mathcal{B}^{ N}_{AB_1}(q)\underset{\text{JM}({B_1})}{\leq} \frac{1-q}{2^N},
\end{equation}
for $q\in[1/\sqrt{2},1]$.
\end{prop}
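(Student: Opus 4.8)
The strategy is to reduce the bound to a statement purely about Alice's side by exploiting the fact that, under a JM model at $B_1$, there is a single parent POVM $\{E_{\vec{b}}\}$ on Bob's Hilbert space from which all the $B^{\bm{y},1}_{\bm{b}}$ are recovered by classical post-processing, and that the self-testing statement \eqref{CHSHNselfTest} has already pinned down Alice's marginals \eqref{uniformMarg} and the steered states \eqref{ImpImplication} up to the local isometry $V_{\tilde{B}}$. First I would write $\mathcal{B}^{N}_{AB_1}(q)$ in terms of the conditional states $\tilde{\rho}^{(\tilde{B})}_{\bm{a}|\bm{x}}$ that Alice prepares on Bob's system: using $p(\bm{a},\bm{b}|\bm{x},\bm{y},1) = p_A(\bm{a}|\bm{x}) \tr(\tilde{\rho}^{(\tilde{B})}_{\bm{a}|\bm{x}} \tilde{B}^{\bm{y},1}_{\bm{b}})$ and \eqref{uniformMarg}, the whole expression \eqref{BB84Nqgame} becomes $\tfrac{1}{2^N}$ times a linear functional of the operators $\tilde{B}^{\bm{y},1}_{\bm{b}}$ evaluated on the (isometry-rotated) states $A^{\bm{x}}_{\bm{a}} \otimes \sigma_{B'}$.

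Next I would insert the JM assumption: $\tilde{B}^{\bm{y},1}_{\bm{b}} = \sum_{\vec{b}} \delta_{b_y, b}\, E_{\vec{b}}$ — except here $B_1$'s inputs and outputs are $N$-bit strings, so the parent POVM has outcomes that specify $\bm{b}_{\bm{y}}$ for every $\bm{y}$; the post-processing is $\delta_{\bm{b}, \bm{b}_{\bm{y}}}$. Pushing everything through the isometry $V_{\tilde{B}}$ and using \eqref{ImpImplication}, the JM-constrained value of $\mathcal{B}^{N}_{AB_1}(q)$ becomes $\tfrac{1}{2^N}\tr\big[(\text{something built from }A^{\bm{x}}_{\bm{a}}\otimes\sigma_{B'})\, \tilde E_{\vec{b}}\big]$ summed appropriately, where $\tilde E_{\vec b} = V_{\tilde B} E_{\vec b} V_{\tilde B}^\dagger$ lives on $\mathbb{C}^{2^N}\otimes \cH_{B'}$ and forms a POVM on the support. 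The key point is that the reward term $\mathcal{B}^N_{AB_1}$ can contribute at most when $\bm{b}=\bm{b}_{\bm{y}}$ with $\bm{x}=\bm{y}$, $\bm{a}=\bm{b}$, so it collapses to $\tfrac{1}{2^N}\sum_{\bm{x}} \sum_{\vec b} \tr[(A^{\bm{x}}_{\bm{b}_{\bm{x}}}\otimes\sigma_{B'})\tilde E_{\vec b}]$, while the penalty term, summing over all $\bm{b}\in\{0,1\}^N$ and $\bm{y}$ of $p_B(\bm b|\bm y,1)$, telescopes to something like $\tfrac{q}{2^N}\sum_{\bm y}\sum_{\vec b}\tr[(\I_A\otimes\sigma_{B'})\tilde E_{\vec b}]=\tfrac{q}{2^N}\cdot 2^N = q$ (or an analogous constant per $\bm{y}$), because $\sum_{\bm b} B^{\bm y,1}_{\bm b}$ over clicking outcomes is $(1-\text{no-click})$ and the parent POVM is normalized. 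So the bound reduces to controlling $\max_{\{\tilde E_{\vec b}\}} \tfrac{1}{2^N}\sum_{\bm{x}}\sum_{\vec b}\delta_{(\cdot)}\tr[(A^{\bm{x}}_{\bm{b}_{\bm{x}}}\otimes\sigma_{B'})\tilde E_{\vec b}]$ over all valid parent POVMs, minus the penalty.

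The crucial arithmetic step is the operator inequality for the projectors $A^{\bm{x}}_{\bm{a}}$ from \eqref{CHSHNmeas}: since the $N$ tensor factors are the anticommuting Pauli observables $Z,X$, for each fixed pair of distinct settings the two rank-$2^{N-1}$ projectors $A^{\bm{x}}_{\bm{a}}$ and $A^{\bm{x}'}_{\bm{a}'}$ have a controlled overlap, and more to the point $\sum_{\bm{x}} A^{\bm{x}}_{f(\bm{x})} \leq c\cdot \I$ for any choice function $f$, with the optimal constant governed by the operator norm of a sum of products of Paulis. One shows $\big\| \sum_{\bm{x}\in\{0,1\}^N} A^{\bm{x}}_{f(\bm{x})}\big\| \le$ (something like) $1 + $ a geometric-type sum, and the penalty with $q\ge 1/\sqrt2$ is exactly tuned so that $c - q\cdot(\text{penalty weight}) \le 1$, yielding $\mathcal{B}^N_{AB_1}(q)\le \tfrac{1-q}{2^N}$ after dividing by $2^N$ and using $\sum_{\vec b}\tilde E_{\vec b}\le \I$. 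I expect this operator-norm bound on $\sum_{\bm{x}} A^{\bm{x}}_{f(\bm{x})}$, together with checking that the threshold $q=1/\sqrt{2}$ is precisely where it kicks in, to be the main obstacle — it is the $N$-fold generalization of the CHSH-operator bound $\|Z+X\|=\sqrt2$ that underlies the single-copy $\eta^*=1/2$ result, and getting the constant exactly right (rather than an exponentially loose bound) requires carefully exploiting the product structure, presumably by induction on $N$ factor-by-factor. The rest — rewriting in terms of steered states, applying \eqref{uniformMarg} and \eqref{ImpImplication}, and the telescoping of the penalty — is bookkeeping.
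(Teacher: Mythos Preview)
Your reduction via the steered states, \eqref{uniformMarg}, \eqref{ImpImplication} and the isometry $V_{\tilde B}$ is exactly the paper's first step and is correct. The gap is in your treatment of the penalty term and, as a consequence, in the operator-norm problem you end up with.

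The penalty does \emph{not} telescope to a constant. Under the JM model, for a fixed parent outcome $\vec{\bm b}$ the penalty contributes
\[
\frac{q}{2^N}\sum_{\bm y}\mathbf{1}[\bm b_{\bm y}\neq\emptyset]\,\tr\big(\tilde\rho_{\tB}\,E_{\vec{\bm b}}\big)\;=\;\frac{q\,k_{\vec{\bm b}}}{2^N}\,\tr\big(\tilde\rho_{\tB}\,E_{\vec{\bm b}}\big),
\]
where $k_{\vec{\bm b}}=|\{\bm y:\bm b_{\bm y}\neq\emptyset\}|$ is the number of ``click'' settings in $\vec{\bm b}$. Hence the penalty must be folded into the per-$\vec{\bm b}$ operator, and what has to be bounded uniformly in $\vec{\bm b}$ is
\[
\lambda_{\max}\!\Big(\sum_{\bm y:\,\bm b_{\bm y}\neq\emptyset} A^{\bm y}_{\bm b_{\bm y}}\Big)\;-\;q\,k_{\vec{\bm b}},
\]
for \emph{every} $k_{\vec{\bm b}}\in\{0,1,\dots,2^N\}$ and every choice of the $k_{\vec{\bm b}}$ projectors --- not just the full sum over all $2^N$ settings. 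Bounding only $\bigl\|\sum_{\bm x\in\{0,1\}^N}A^{\bm x}_{f(\bm x)}\bigr\|$ (the $k=2^N$ case) is insufficient: the adversary is free to choose $\vec{\bm b}$ with any number of no-clicks, and if the penalty were a fixed constant, putting all POVM weight on a $\vec{\bm b}$ with large $k$ would give a reward norm growing like $1+(k-1)/\sqrt2$, swamping any constant subtraction.

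The paper handles this by setting $C_{\vec{\bm b}}=\sum_{\bm y:\bm b_{\bm y}\neq\emptyset}\!\bigl(A^{\bm y}_{\bm b_{\bm y}}-q\,\I\bigr)$ and proving $\bigl\|\sum_{l=1}^k S_l\bigr\|\le 1+(k-1)/\sqrt2$ for an \emph{arbitrary} $k$-element subset $\{S_l\}$ of the tensor-product projectors, via the Popovici--Sebesty\'en inequality $\bigl\|\sum_l S_l\bigr\|\le\|\Gamma\|$ with $\Gamma_{ll'}=\|S_lS_{l'}\|$ and the observation that $\|S_lS_{l'}\|\le1/\sqrt2$ for $l\neq l'$. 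For $q\ge1/\sqrt2$ this yields $\lambda_{\max}(C_{\vec{\bm b}})\le1-q$ uniformly in $k$, which is precisely what gives \eqref{BB84maxJM}. Your proposed factor-by-factor induction on $N$ does not obviously cope with an arbitrary $k$-subset of settings (such subsets do not factorize across tensor factors), so even after fixing the penalty bookkeeping it is not clear that route delivers the tight $1+(k-1)/\sqrt2$ bound needed for every $k$ simultaneously.
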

\begin{proof}
Let us rewrite the value of the $q$-penalized $N$-product BB84 inequality \eqref{BB84Nqgame} with JM measurements for $B_1$ using \eqref{JM} as
\begin{align}\label{BB84Nqgame1JM} 
    &\mathcal{B}^{ N}_{AB_1}(q)= \frac{1}{2^N}\sum_{\bm{y},\bm{b}\in\{0,1\}^N}\left[p(\bm{b},\bm{b}|\bm{y},\bm{y},1) -qp_{B}(\bm{b}|\bm{y},1)\right] , \\ \nonumber
    &\underset{\text{JM}(B_1)}{=}\frac{1}{2^N}\sum_{\bm{y},\bm{b}\in \{0,1\}^N} \sum_{\vec{\bm{b}}} \delta_{ \bm{b}_{\bm{y}} , \bm{b}} \Big[ \tr(\tilde{\rho}_{\tA \tB} \tilde{A}^{\bm{y}}_{\bm{b}} \otimes 
 E_{ \vec{\bm{b}} })  \\ \nonumber & \hspace{180pt}
    -q\tr(\tilde{\rho}_{\tB} E_{\vec{\bm{b}}})\Big], 
\end{align}
where the POVM $\{E_{\vec{\bm{b}}}\}_{\vec{\bm{b}}}$ describes
the joint measurement with outcomes $\vec{\bm{b}} \equiv ( \bm{b}_{ \bm{y} } \in \{0,1\}^N \cup \{\emptyset\} )_{\bm{y}}$, $\tilde{\rho}_{\tA \tB}$ and $\{\tilde{A}^{\bm{x}}_{\bm{a}}\}$ are the hereto unknown shared state and Alice's measurement operators, and $\tilde{\rho}_{\tB}=\tr_A(\tilde{\rho}_{\tA \tB})$ describes Bob's marginal state. 

Next, we express \eqref{BB84Nqgame1JM} in terms of Alice's marginal probabilities $p_A( \bm{a}|\bm{x} ) = \tr( \tilde{\rho}_{\tA \tB} \tilde{A}^{ \bm{x} }_{ \bm{a} } \otimes \mathbb{I}_{\tB} )$ of obtaining the outcome $\bm{a}$ given input $\bm{x}$ and the corresponding remotely prepared states $\tilde{ \rho }^{(\tilde{B})}_{ \bm{a} | \bm{x} } = \frac{ \tr_A( \tilde{\rho}_{\tA \tB} \tilde{A}^{ \bm{x} }_{ \bm{a} } \otimes \mathbb{I}_{\tB} ) }{ p_A(\bm{a} | \bm{x}) }$ as
\begin{align} \label{BB84Nqgame2JM}
    \mathcal{B}^{ N}_{AB_1}(q) = \frac{1}{2^N} \sum_{ \vec{\bm{b}} } \tr\left[ E_{ \vec{\bm{b}} } \sum_{ \bm{y}|\bm{b}_{\bm{y}} \neq  \emptyset } \left( p_A( \bm{b}_{\bm{y}}|\bm{y} ) \tilde{\rho}^{(\tilde{B})}_{ \bm{b}_{\bm{y}}|\bm{y} } - q\tilde{\rho}_{\tB} \right)
    \right].
\end{align}
We now use the self-testing implications \eqref{uniformMarg} and \eqref{ImpImplication} from the maximum violation of the $N$-product CHSH inequality between $A$ and $B_0$ to obtain,
\begin{equation}\label{BB84Nqgame3JM}
\begin{split}
    \mathcal{B}^{ N}_{AB_1}(q) & =  \left. \frac{1}{2^N} \sum_{ \vec{\bm{b}} } \tr\Big[ V_{\tilde{B}}^\dagger V_{\tilde{B}} E_{ \vec{\bm{b}} } V_{\tilde{B}}^\dagger V_{\tilde{B}} \right. \\
    & \ \ \ \ \ \ \ \ \ \ \ \ \ \ \ \ \left. \sum_{ \bm{y}|\bm{b}_{\bm{y}} \neq  \emptyset } \left( p_A( \bm{b}_{\bm{y}}|\bm{y} ) \tilde{\rho}^{(\tilde{B})}_{ \bm{b}_{\bm{y}}|\bm{y} } - q\tilde{\rho}_{\tB} \right)
    \Big]. \right. \\
    & \left. = \frac{1}{2^{2N}} \sum_{ \vec{\bm{b}} } \tr \left[ E'_{ \vec{\bm{b}} } \sum_{ \bm{y}|\bm{b}_{\bm{y}}\neq \emptyset } \left( A^{\bm{y}}_{\bm{b}_{\bm{y}}} - q\mathbb{I}_B \right) \otimes \sigma_{B'} \right],
    \right.
\end{split}
\end{equation}
where we used Eq.~\eqref{ImpImplication} and consequently that $V_{\tilde{B}} \tilde{\rho}_{B} V_{\tilde{B}}^\dagger = \sum_{ \bm{a} } p_A ( \bm{a}|\bm{x} ) A^{\bm{x}}_{\bm{a}} \otimes \sigma_{B'} = \frac{ \mathbb{I}_B }{ 2^N } \otimes \sigma_{B'}$, and we introduced operators $\{E'_{ \vec{\bm{b}} } = V_{\tilde{B}} E_{ \vec{\bm{b}} } V_{\tilde{B}}^\dagger \succeq 0 \}_{ \vec{\bm{b}}}$ which satisfy $\sum_{\vec{\bm{b}}} E'_{\vec{\bm{b}}} = V_{\tilde{B}} V_{\tilde{B}}^\dagger \preceq \mathbb{I}_{BB'}$.

Our task is to upper bound the maximum achievable value of the linear functional $\mathcal{B}^{ N}_{AB_1}(q)$ \eqref{BB84Nqgame3JM} over operators $\{E'_{\vec{\bm{b}}}\succeq0\}_{\vec{\bm{b}}}$ such that $\sum_{\vec{\bm{b}}}E'_{\vec{\bm{b}}} \preceq \mathbb{I}_{BB'}$. To this end, let us define the (Hermitian) operators $C_{\vec{\bm{b}}}$ on $\cH_B$ via
\begin{align} \label{CoperatorsMain}
    C_{\vec{\bm{b}}} \equiv \sum_{ \bm{y}|\bm{b}_{\bm{y}}\neq \emptyset } \left( A^{\bm{y}}_{\bm{b}_{\bm{y}}} - q\mathbb{I}_B \right) =\sum_{ \bm{y}|\bm{b}_{\bm{y}}\neq \emptyset } \left( \bigotimes^N_{j=1}A^{y_j}_{b_{\bm{y},j}}  - q\mathbb{I}_B \right),
\end{align}
where we used the product structure of Alice's optimal measurements from Eq.~\eqref{CHSHNmeas}. Assume that we can bound the maximal eigenvalue of $C_{\vec{\bm{b}}}$ by some $\gamma \in \mathbb{R}$ uniformly for every $\vec{\bm{b}}$. Then it is easy to see that $\tr[ E'_{\vec{\bm{b}}}(C_{\vec{\bm{b}}}\otimes \sigma_{B'}) ] \le \gamma \tr [ E'_{\vec{\bm{b}}}( \mathbb{I}_B \otimes \sigma_{B'}) ]$, since $E'_{\vec{\bm{b}}} \succeq 0$. Therefore, we can upper bound \eqref{BB84Nqgame3JM} via
\begin{equation}\label{BB84Nqgame4JM}
\begin{split}
\mathcal{B}^{ N}_{AB_1}(q) & \left. = \frac{1}{2^{2N}} \sum_{\vec{\bm{b}}} \tr\left[ E'_{\vec{\bm{b}}}(C_{\vec{\bm{b}}}\otimes \sigma_{B'})
    \right] \right. \\ 
    & \left. \le \frac{1}{2^{2N}} \sum_{\vec{\bm{b}}} \gamma \tr [ E'_{\vec{\bm{b}}}( \mathbb{I}_B \otimes \sigma_{B'}) ] \right. \\
    & \left. \le \frac{1}{2^{2N}} \gamma \tr ( \mathbb{I}_B \otimes \sigma_{B'}) = \frac{\gamma}{ 2^N }
\right. 
\end{split}
\end{equation}
where we used that $\sum_{\vec{\bm{b}}} E'_{\vec{\bm{b}}} \preceq \mathbb{I}_{BB'}$ and $\tr(\mathbb{I}_B) = 2^N$.
Hence, our task now is to upper bound the maximum eigenvalue of the operators $\{C_{\vec{\bm{b}}}\}_{\vec{\bm{b}}}$. Let us denote a generic $C_{\vec{\bm{b}}}$ with $k$ click-events (i.e.~there are $k$ entries of $\vec{\bm{b}}$ that are not equal to $\emptyset$) by $M_k$. Explicitly, we have
\begin{equation} 
    M_k = \sum_{\mathbf{y}|\mathbf{b}_{\mathbf{y}}\neq \emptyset}\left(\bigotimes^N_{j=1} \pi_{\bm{y},j} - q \mathbb{I}_B \right),
\end{equation}
where the summation is over $k$ terms and $\pi_{\bm{y},j} \in 
\left\{\ketbra{0}{0}, \ketbra{1}{1}, \ketbra{+}{+}, \ketbra{-}{-} \right\}$, i.e.~the $\pi_{\bm{y},j}$ are Alice's measurement operators for setting ${\bm{y}}$ and some (unspecified) outcome.
The maximum eigenvalue of $M_k$ is given by
\begin{equation}\label{eq:lambdamax_Ak}
\lambda_{\max}{(M_k)} = \norm{ \sum_{\mathbf{y}|\mathbf{b}_{\mathbf{y}}\neq \emptyset}\bigotimes^N_{j=1} \pi_{\bm{y},j} } - qk,
\end{equation} 
where $\norm{.}$ is the operator norm and we used the fact that $\lambda_{\max} ({ \sum_{\mathbf{y}|\mathbf{b}_{\mathbf{y}}\neq \emptyset}\bigotimes^N_{j=1} \pi_{\bm{y},j} }) = \norm{ \sum_{\mathbf{y}|\mathbf{b}_{\mathbf{y}}\neq \emptyset}\bigotimes^N_{j=1} \pi_{\bm{y},j} }$ (since $\sum_{\mathbf{y}|\mathbf{b}_{\mathbf{y}}\neq \emptyset}\bigotimes^N_{j=1} \pi_{\bm{y},j} 	\succeq0$), 
and that the summation includes $k$ terms. Let us introduce the simplified notation
\begin{equation}
S_{l} \equiv \bigotimes^N_{j=1} \pi_{l,j} ,
\end{equation}
where $l \in \{1, \ldots, k\}$ denotes the $\bm{y}$ in the summation in Eq.~\eqref{eq:lambdamax_Ak}. Note that for a fixed operator $M_k$, $S_l \neq S_{l'}$ if $l \neq l'$ (because different $S_l$ correspond to different measurement settings, and therefore at least one $\pi_{l,j} \neq \pi_{l',j}$). 

In the following, we fix $q = \frac{1}{\sqrt{2}}$. It is easy to compute the norm of 0-click and 1-click operators:
\begin{equation}
\norm{M_0} = 0
\end{equation}
\begin{equation}
\norm{M_1} = \norm{ S_1 } -\frac{1}{\sqrt{2}} = 1 -\frac{1}{\sqrt{2}} > \norm{M_0},
\end{equation}
irrespective of $\bm{y}$, since $\norm{ S_l } = 1$ as these are projections. In the following, we show that $\norm{M_k} \le \norm{M_1}$ for all possible settings, and therefore $\norm{M_1}=1 -\frac{1}{\sqrt{2}}$ upper bounds the maximal eigenvalue of any possible $C_{\vec{\bm{b}}}$. 

The norm of a generic $M_k$ is written as
\begin{equation}
\norm{M_k} = \norm{ S_1 + S_2 + \cdots S_k} - \frac{k}{\sqrt{2}},
\end{equation}
and therefore we need to bound $\norm{ S_1 + S_2 + \cdots S_k}$. For this, we use a result by Popovici and Sebesty\'en \cite{SEBESTYÉN_POPOVICI_2005}: For any set of positive semidefinite matrices (such as the $S_l$), we have
\begin{equation}
\norm{ S_1 + S_2 + \cdots S_k} \le \norm{\Gamma},
\end{equation}
where $\Gamma$ is a $k$-by-$k$ matrix with elements $\Gamma_{ll'} = \norm{\sqrt{S_l}\sqrt{S_{l'}}}$. For our case, we have $\sqrt{S_l} = S_l$, since $S_l$ are projections, which also means that $\Gamma_{ll} = 1$ for all $l$. The rest of the elements are given by
\begin{equation}
\norm{ S_l S_{l'} } = \norm{ \bigotimes_{k=1}^N \pi_{l,k} \pi_{l',k} } \le \prod_{j=1}^{N} \norm{\pi_{l,j} \pi_{l',j} }
\end{equation}
by the submultiplicativity of the operator norm.
Note that $S_l$ and $S_{l'}$ differ in the type of Pauli on at least one tensor factor $j^\ast$, and for this particular factor we therefore have $\norm{\pi_{l,j^\ast} \pi_{l',j^\ast} } = \frac{1}{\sqrt{2}}$. The rest of the norms are bounded by $1$, and therefore in general we have
\begin{equation}
    \Gamma_{l,l'} \leq\begin{cases}1, \ & \text{if} \ l=l', \\
    \frac{1}{\sqrt{2}}, \  & \text{if} \ l\neq l'.
    \end{cases}
\end{equation}
This implies that the matrix $\Gamma$ is upper bounded element-wise by the matrix
\begin{align}
G_k &= \begin{pmatrix}
1 & \frac{1}{\sqrt{2}} & \frac{1}{\sqrt{2}} & \cdots & \frac{1}{\sqrt{2}} \\
\frac{1}{\sqrt{2}} & 1 & \frac{1}{\sqrt{2}} & \cdots & \frac{1}{\sqrt{2}} \\
\vdots & & \ddots & & \vdots \\
\frac{1}{\sqrt{2}} & \frac{1}{\sqrt{2}} & \frac{1}{\sqrt{2}} & \cdots & 1
\end{pmatrix} \\ \nonumber &
= \left( 1 - \frac{1}{\sqrt{2}} \right) \mathbb{I} + \begin{pmatrix}
\frac{1}{\sqrt{2}} & \frac{1}{\sqrt{2}} & \frac{1}{\sqrt{2}} & \cdots & \frac{1}{\sqrt{2}} \\
\frac{1}{\sqrt{2}} & \frac{1}{\sqrt{2}} & \frac{1}{\sqrt{2}} & \cdots & \frac{1}{\sqrt{2}} \\
\vdots & & \ddots & & \vdots \\
\frac{1}{\sqrt{2}} & \frac{1}{\sqrt{2}} & \frac{1}{\sqrt{2}} & \cdots & \frac{1}{\sqrt{2}}
\end{pmatrix}.
\end{align}
Since every element of $\Gamma$ and $G_k$ is non-negative, the element-wise inequality implies $\norm{\Gamma} \le \norm{G_k}$ (see Appendix \ref{app:elementwise}), and therefore
\begin{equation}
\norm{\Gamma}\leq \norm{G_k} = \left( 1 - \frac{1}{\sqrt{2}} \right) + \frac{k}{\sqrt{2}} = 1 + \frac{k-1}{\sqrt{2}}.
\end{equation}
putting everything together, we obtain
\begin{equation}
\norm{M_k} \le 1 + \frac{k-1}{\sqrt{2}} - \frac{k}{\sqrt{2}} = 1 - \frac{1}{\sqrt{2}} = \norm{M_1},
\end{equation}
as desired. The same argument works for every $\frac{1}{\sqrt{2}} \le q < 1$, and overall we get
\begin{equation}
    \norm{C_{\bm{\vec{b}}}} \leq 1-q=\gamma,
\end{equation}
which when plugged back into \eqref{BB84Nqgame4JM} yields \eqref{BB84maxJM}.


\end{proof}

As a direct consequence of Proposition \ref{BB84prop}, we find that the $r\text{BB84}^N$ strategy \eqref{BB84idealScore} violates the JM threshold \eqref{BB84maxJM} for all $\eta>\eta^*=1/2^N$. Next, we consider the $r\text{CHSH}^N$ strategy.

\subsection{$\eta^*=1/2^N$ with $r\text{CHSH}^N$}
Proceeding in a similar way, we consider a penalized version of the $N$-product CHSH inequality \eqref{CHSHNTB} for $A$ and $B_1$,
\begin{equation}\label{CHSHNq}
    \mathcal{C}^{N}_{AB_1}(q)= \mathcal{C}^{N}_{AB_1}-\frac{q}{2^N}\sum_{\bm{y},\bm{b}\in\{0,1\}^N}p_{B}(\bm{b}|\bm{y},1),
\end{equation}
where $q\in[0,\frac{1}{2^N}(1+1/\sqrt{2})^N)$. A straightforward computation shows that the $r\text{CHSH}^N$ strategy with $B_1$ clicking with efficiency $\eta$ achieves 
\begin{equation} \label{CHSHidealScore}
    \mathcal{C}^{N}_{AB_1}(q)=\left(\alpha^N-q\right)\eta.
\end{equation}
Next, via the following proposition we derive an upper bound on the maximal value of \eqref{CHSHNq} achievable with JM measurements in $B_1$ to compare with \eqref{CHSHidealScore}.
\begin{prop} \label{CHSHProp}
    When the devices $A$ and $B_0$ witness the maximal quantum violation of the $N$-product CHSH inequality \eqref{CHSHNTB}, the maximal achievable value of the $q$-penalized $N$-product CHSH inequality \eqref{CHSHNq} with joint-measurable (JM) measurements in $B_1$ satifies

\begin{equation} \label{CHSHmaxJM}
    \mathcal{C}^{ N}_{AB_1}(q)\underset{\mathcal{JM}(B_1)}\leq \frac{1}{2^N}\left(\alpha^N-q\right),
\end{equation}
for $q\in[{\alpha^{N-1}\beta}, {\alpha^N}]$ where $\beta=\frac{1}{8} \left(2+\sqrt{2}+\sqrt{4 \sqrt{2}-2}\right)\approx 0.66581$. 
\end{prop}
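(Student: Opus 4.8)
The plan is to run the proof of Proposition~\ref{BB84prop} almost verbatim, with the single‑qubit ``CHSH correlator'' operators and their tensor products taking over the role played there by the projectors $A^{\bm y}_{\bm b}$.

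\emph{Step 1 — reduction to an operator inequality.} First I would write the value of the $q$‑penalized $N$‑product CHSH functional \eqref{CHSHNq} achieved by a JM model in $B_1$ using \eqref{JM}, and re‑express it in terms of Alice's marginals $p_A(\bm a|\bm x)$ and her remotely prepared states $\tilde\rho^{(\tilde B)}_{\bm a|\bm x}$, exactly as in \eqref{BB84Nqgame1JM}--\eqref{BB84Nqgame2JM}. The only change is that the CHSH predicate $\delta_{\bm x\cdot\bm y=\bm a\oplus\bm b}$ replaces $\delta_{\bm x,\bm y}\delta_{\bm a,\bm b}$: summing over $\bm a$ for fixed $\bm x$ then attaches to a clicking setting $\bm y$ with outcome $\bm b_{\bm y}\neq\emptyset$ the Alice‑side operator $T^{\bm y}_{\bm b_{\bm y}}:=\frac{1}{2^N}\sum_{\bm x}A^{\bm x}_{\bm x\cdot\bm y\oplus\bm b_{\bm y}}=\bigotimes_{j=1}^N t^{y_j}_{b_{\bm y,j}}$, where the tensor‑product structure comes from the self‑tested form \eqref{CHSHNmeas} of Alice's measurements and $t^{y}_{b}=\frac12\mathbb{I}_2+\frac{(-1)^{b}}{2\sqrt2}\cdot\frac{Z+(-1)^{y}X}{\sqrt2}$. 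Invoking the self‑testing consequences \eqref{uniformMarg} and \eqref{ImpImplication} (which only involve Alice's side) and setting $E'_{\vec{\bm b}}=V_{\tilde B}E_{\vec{\bm b}}V_{\tilde B}^\dagger\succeq0$ with $\sum_{\vec{\bm b}}E'_{\vec{\bm b}}\preceq\mathbb{I}_{BB'}$, the manipulations of \eqref{BB84Nqgame3JM} then turn the JM value into $\mathcal{C}^{N}_{AB_1}(q)=\frac{1}{2^{2N}}\sum_{\vec{\bm b}}\tr[E'_{\vec{\bm b}}(C_{\vec{\bm b}}\otimes\sigma_{B'})]$ with $C_{\vec{\bm b}}=\sum_{\bm y|\bm b_{\bm y}\neq\emptyset}(T^{\bm y}_{\bm b_{\bm y}}-q\mathbb{I}_B)$ (the extra $1/2^N$ in the CHSH normalisation is absorbed into the average defining $T^{\bm y}_{\bm b_{\bm y}}$). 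By the positivity argument of \eqref{BB84Nqgame4JM}, it then suffices to prove $\lambda_{\max}(C_{\vec{\bm b}})\le\alpha^N-q$ for every $\vec{\bm b}$, which plugged back in gives \eqref{CHSHmaxJM}.

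\emph{Step 2 — eigenvalue estimate.} Each $t^{y}_{b}$ is positive semidefinite with spectrum $\{\alpha,1-\alpha\}$, so $T^{\bm y}_{\bm b}\succeq0$ and $\|T^{\bm y}_{\bm b}\|=\alpha^N$; hence a generic $C_{\vec{\bm b}}$ with $k$ click‑events equals $M_k=\sum_{l=1}^k T_l-kq\mathbb{I}_B$ with pairwise‑distinct tensor products $T_l$, and $\lambda_{\max}(M_k)=\|\sum_l T_l\|-kq$ since $\sum_l T_l\succeq0$. The cases $k=0$ (forcing the upper endpoint $q\le\alpha^N$) and $k=1$ (giving equality) are immediate. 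For $k\ge2$ I would bound $\|\sum_l T_l\|$ by the Popovici--Sebesty\'en inequality together with the element‑wise norm‑domination lemma (Appendix~\ref{app:elementwise}), as in the $r\text{BB84}^N$ case: the Gram‑type matrix $\Gamma$ has diagonal $\alpha^N$ and off‑diagonal $\Gamma_{ll'}=\prod_j\|\sqrt{t^{y_j^{(l)}}_{b_j^{(l)}}}\sqrt{t^{y_j^{(l')}}_{b_j^{(l')}}}\|$, and the single‑qubit overlaps are computed in closed form — $\alpha$ when the two operators coincide, $\sqrt{\alpha(1-\alpha)}=\frac{1}{2\sqrt2}$ when they share a setting but differ in outcome, and $\frac{1+\sqrt3}{4}$ when the settings (hence Bloch directions) differ. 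Since distinct $T_l$ differ on at least one tensor factor with differing setting, the extremal configuration is that of two settings differing in a single factor and agreeing as operators on all the rest; tracking how many such ``close'' pairs can coexist among $k\le2^N$ settings and choosing the smallest $q$ for which $\lambda_{\max}(M_k)\le\alpha^N-q$ survives for every $k$ gives the admissible range $q\in[\alpha^{N-1}\beta,\alpha^N]$, with $\beta$ the larger root of $x^2-\alpha x+\alpha(1-\alpha)=0$, i.e.\ $\beta=\frac18\big(2+\sqrt2+\sqrt{4\sqrt2-2}\big)$.

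\emph{Main obstacle.} The delicate point is this last step: unlike the rank‑one projectors $\pi_{\bm y,j}$ of Proposition~\ref{BB84prop}, the $t^{y}_{b}$ are full rank, so both the diagonal and the off‑diagonal of $\Gamma$ carry powers of $\alpha$, the single‑qubit overlap is genuinely three‑valued, and obtaining the advertised threshold requires resolving the worst \emph{joint} configuration of the $k$ clicking settings (and of the outcomes $\bm b_{\bm y}$ that $\vec{\bm b}$ assigns to them) rather than simply substituting the largest pairwise overlap into a uniform element‑dominating matrix. Once the uniform bound $\lambda_{\max}(C_{\vec{\bm b}})\le\alpha^N-q$ is in hand, substituting $\gamma=\alpha^N-q$ into the Step‑1 estimate yields \eqref{CHSHmaxJM}, and comparing it with the $r\text{CHSH}^N$ value $(\alpha^N-q)\eta$ of \eqref{CHSHidealScore} shows the JM bound is violated precisely for $\eta>1/2^N$.
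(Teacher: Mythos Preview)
Your Step~1 is correct and coincides with the paper's argument essentially line for line: the JM value is rewritten in terms of Alice's remotely prepared states, the self-testing consequences \eqref{uniformMarg}--\eqref{ImpImplication} are invoked, and one is left with $\mathcal{C}^{N}_{AB_1}(q)=\frac{1}{2^{2N}}\sum_{\vec{\bm b}}\tr[E'_{\vec{\bm b}}(C_{\vec{\bm b}}\otimes\sigma_{B'})]$ with $C_{\vec{\bm b}}=\sum_{\bm y|\bm b_{\bm y}\neq\emptyset}\big(\bigotimes_j t^{y_j}_{b_{\bm y,j}}-q\mathbb{I}_B\big)$. Your single-qubit operators $t^{y}_{b}$ are precisely the paper's $\pi'_{\bm y,j}$, and the reduction to $\lambda_{\max}(C_{\vec{\bm b}})\le\alpha^N-q$ is identical.

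Where you diverge from the paper is in your ``Main obstacle''. The paper does \emph{not} perform the refined combinatorial analysis you sketch; it does exactly what you claim is insufficient. It bounds every off-diagonal entry of $\Gamma$ uniformly by $\alpha^{N-1}\beta$---arguing that on at least one factor $j^\ast$ the settings differ, there $\|\sqrt{\pi'_{l,j^\ast}}\sqrt{\pi'_{l',j^\ast}}\|\le\beta$, and the remaining factors contribute at most $\alpha$ each---and then passes to the element-dominating matrix $G_k=(\alpha^N-\alpha^{N-1}\beta)\mathbb{I}_k+\alpha^{N-1}\beta J_k$, giving $\|M_k\|\le\alpha^N+(k-1)\alpha^{N-1}\beta-kq\le\alpha^N-q$ for $q\ge\alpha^{N-1}\beta$. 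No tracking of ``close pairs'' or worst joint configurations is done.

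There is also a numerical discrepancy: for two $\pi'$ with orthogonal Bloch vectors (the differing-setting case), your value $\frac{1+\sqrt{3}}{4}\approx0.683$ is the correct single-qubit overlap, whereas the paper asserts $\beta\approx0.666$ for the same quantity. Plugging your value into the paper's uniform-bound argument yields the range $q\in[\alpha^{N-1}\tfrac{1+\sqrt{3}}{4},\alpha^N]$ rather than $[\alpha^{N-1}\beta,\alpha^N]$; the structure of the proof and the conclusion $\eta^*=1/2^N$ are unaffected, only the precise $q$-endpoint shifts. Your quadratic characterisation of $\beta$ is algebraically correct, but no argument in the paper (or in your proposal) actually derives that threshold from a combinatorial refinement---so the harder route you outline is neither the paper's, nor completed in your sketch.
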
 
\begin{proof}
The proof proceeds in a similar way to the proof of Proposition \ref{BB84prop} and has been deferred to the appendix for brevity. Notably, unlike the case of the $r\text{BB84}^{ N}$ strategies, where each $C_{\vec{\bm{b}}}$ contains a sum of tensor products of two-dimensional projections $\pi_{\bm{y},j}\in\{\ketbra{0}{0},\ketbra{1}{1},\ketbra{+}{+},\ketbra{-}{-}\}$ \eqref{BB84Nqgame4JM}, for $r\text{CHSH}^{ N}$ strategies each $C_{\vec{\bm{b}}}$ contains a sum of tensor products of two-dimensional density operators $\pi_{\mathbf{y},i}\in\{\frac{1}{2}(\ketbra{0}{0}+\ketbra{+}{+}),\frac{1}{2}(\ketbra{0}{0}+\ketbra{-}{-}),\frac{1}{2}(\ketbra{1}{1}+\ketbra{+}{+}),\frac{1}{2}(\ketbra{1}{1}+\ketbra{-}{-})\}$. This results in the Popovici--Sebesty\'en bound \cite{SEBESTYÉN_POPOVICI_2005}
not being tight, also leading to the lower bound on $q$ [$q\geq \alpha^{N-1}\beta$, for \eqref{CHSHmaxJM}] not being tight. An exhaustive search over the ${(2^N+1)}^{2^N}$ operators $\{C_{\vec{\bm{b}}}\}_{\vec{\bm{b}}}$ for $N=1,2,3$ yields a tighter lower bound $q\geq \alpha^{N-1}\beta'$  \eqref{CHSHmaxJM}, with $\beta'=\frac{4 - \sqrt{2}}{4}\approx 0.64644\leq \beta$, which we conjecture to also hold for $N>3$ (code available online \cite{chaturvedi2025parrelreproutedbell}).
\end{proof}
Proposition \ref{CHSHProp} implies that for $q\in[{\alpha^{N-1}\beta}, {\alpha^N})$, the $r\text{CHSH}^N$ strategy achieving \eqref{CHSHidealScore} violates the JM threshold \eqref{CHSHmaxJM} whenever $\eta>\eta^*=1/2^N$. 

Until now, we have assumed that that the devices $A$ and  $B_0$ witness perfect nonlocal correlations which maximally violate the $N$-product CHSH inequality. In the next section, we analyze the robustness of our results.

\section{Robustness}
In this section, we demonstrate the robustness of our results in two different ways.  First, we recall that finding the critical detection efficiency $\eta^*$ of $B_1$ for the certification of NJM measurements in $B_1$ given that the devices $A$ and $B_0$ witness imperfect nonlocal correlations, forms a noncommutative polynomial optimization (NPO) problem \cite{Navascues2007,Navas_2008,PNA10}. The optimization problem is further simplified by taking advantage of the fact that for projective measurements, joint measurability is equivalent to the commutativity of the corresponding projection operators, and all measurements in routed Bell experiments can be taken as projective. Thus, instead of enforcing the condition \eqref{JM} for jointly measurable measurements in $B_1$, commutation relations $[B^{\bm{y},1}_{\bm{b}},B^{\bm{y}',1}_{\bm{b}'}]=0$ for all $\bm{y},\bm{y}'\in\{0,1\}^N$ are imposed \cite{Lobo2024certifyinglongrange}. The resultant NPO problem can then be relaxed and the relaxations solved via the Navascu\'es--Pironio--Ac\'in (NPA) hierarchy \cite{Navascues2007,Navas_2008,PNA10}, thereby providing upper bounds on the critical detection efficiency $\eta^*$.
We employ this method to compare how the single-copy strategies $r\text{CHSH}$ and $r\text{BB84}$ and their parallel repeated versions $r\text{CHSH}^2$ and $r\text{BB84}^2$ fare under source imperfections. Specifically, we model an imperfect source with visibility $v\in[0,1]$ that distributes the state, 
\begin{equation}
    v\ketbra{\phi^+}{\phi^+}^{\otimes N}+(1-v)\frac{\mathbb{I}_{AB}}{2^{2N}},
\end{equation}
 where $\mathbb{I}_{AB}$ is the identity operator on $\mathbb{C}^{2^{2N}}$. 
 
Keeping the entire resultant correlation between $A$ and $B_0$ as a constraint, we use the third-order relaxation of the NPA hierarchy to obtain bounds on $\eta^*$ for $r\text{CHSH}$ and $r\text{BB84}$ across $v\in[0,1]$. For $r\text{CHSH}^2$ and $r\text{BB84}^2$, we use the $1+AB$-order relaxation. We plot these results in Fig. \ref{EtaVsVisibility}. We find that both parallel repeated strategies exhibit considerably improved robustness to source imperfections compared to their single-copy counterparts. The python codes are available online \cite{chaturvedi2025parrelreproutedbell}.
\begin{figure}[ht]
    \centering
    \includegraphics[width=\linewidth]{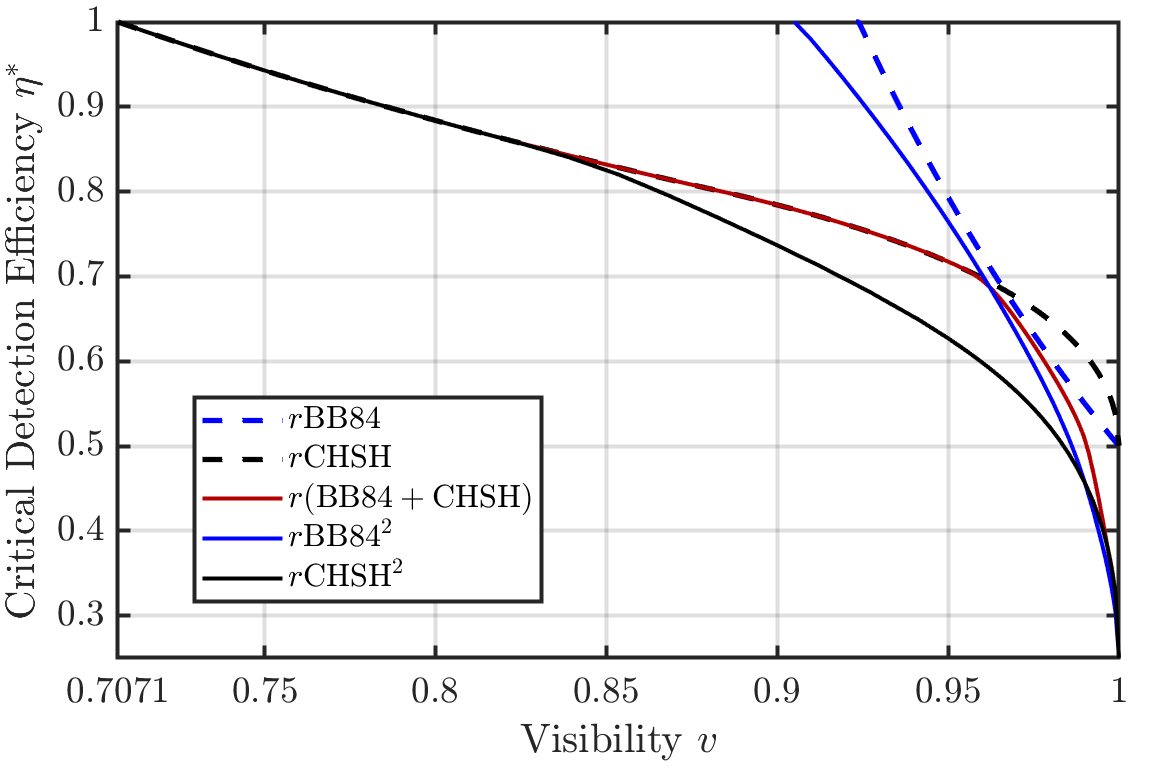}
    \caption{Comparison of the upper bounds on the critical detection efficiency $\eta^*$ obtained via the NPA hierarchy for the certification of NJM measurements in device $B_1$ as a function of source visibility $v$  \cite{chaturvedi2025parrelreproutedbell}. The devices $A,B_0$ are assumed to have perfect detection. The figure compares the single-copy strategies $r\text{BB84}$ and $r\text{CHSH}$ (dashed blue and black curves, respectively, obtained at level $3$ of the NPA hierarchy \cite{Navascues2007,Navas_2008,PNA10}) to their parallel repeated counterparts $r\text{BB84}^2$ and $r\text{CHSH}^2$ (solid blue and black curves, respectively, obtained at level $1+AB$). We find that both parallel repeated strategies exhibit considerably improved robustness to source imperfections compared to their single-copy counterparts. 
    The solid red curve obtained at level $2+AAB_1$ corresponds to the qubit-based $r(\text{BB84}+\text{CHSH})$ \eqref{r(BB84+CHSH)} from \cite{sekatski2025certificationquantumcorrelationsdiqkd}, which is also surpassed by the ququart-based $r\text{BB84}^2$ and $r\text{CHSH}^2$ strategies. }
    \label{EtaVsVisibility}
\end{figure}

This optimization problem is computationally highly demanding and becomes impractical for $N>2$. To demonstrate the in-principle robustness of our results for $N>2$, we translate the robustness of self-testing statements for the maximal violation of $N$-product CHSH inequalities to bounds on the critical detection efficiency in the routed scenario. Let us suppose that we have a robust self-testing statement for the $N$-product CHSH inequalities, parametrized by some $\varepsilon > 0$, the gap between the maximal and actual Bell inequality violation. In particular, assume that we have that for all purifications $\ket{\tpsi}_{\tA \tB P}$ of all physical states $\tilde{\rho}_{\tA \tB}$ on $\cH_{\tA} \otimes \cH_{\tB}$ and physical measurements $\{\tA^{\bm{x}}_{\bm{a}}\}_{\bm{a}}$ on $\mathcal{H}_{\tA}$ and $\{\tB^{\bm{y}}_{\bm{b}}\}_{\bm{b}}$ on $\mathcal{H}_{\tB}$ that are $\varepsilon$-compatible with the target distribution, there exist Hilbert spaces $\cH_{A'}$ and $\cH_{B'}$ and isometries $V_{\tilde{A}} : \mathcal{H}_{\tA} \to \mathbb{C}^{2^N} \otimes \mathcal{H}_{A'}$ and $V_{\tilde{B}}: \mathcal{H}_B \to \mathbb{C}^{2^N} \otimes \mathcal{H}_{B'}$ such that
\begin{equation} \label{robustSelfTestCHSHN}
\begin{aligned}
&\Bigl\| 
\bigl(V_{\tilde{A}} \otimes V_{\tilde{B}} \otimes \mathbb{I}_P\bigr)
\bigl(\tilde{A}^{\bm{x}}_{\bm{a}} \otimes \tilde{B}^{\bm{y}}_{\bm{b}} \otimes \mathbb{I}_P\bigr)\ket{\tilde{\psi}}_{ABP} \\
&\quad - \bigl({A}^{\bm{x}}_{\bm{a}} \otimes {B}^{\bm{y}}_{\bm{b}}\bigr)\ket{\phi^+}^{\otimes N}
\otimes \ket{\text{junk}}_{A'B'P}
\Bigr\|
\;\le\; f(N,\varepsilon)
\end{aligned}
\end{equation}
for some function $f(N,\varepsilon) \ge 0$ such that $f(N,\varepsilon) \to 0$ as $\varepsilon \to 0$.
Then via the following proposition we demonstrate that the robust self-testing statement \eqref{robustSelfTestCHSHN} translates to the robustness of the critical detection efficiency in the $q$-penalized $N$-product BB84 inequality.
\begin{prop}
    Suppose that $A$ and $B_0$ witness $\mathcal{C}^N_{AB_0}=\alpha^N-\varepsilon$, for some $\varepsilon > 0$. Then, given a robust self-testing statement of the form \eqref{robustSelfTestCHSHN}, there exists $0 < \delta \le 2^{2N+1} \left[ f^2(N,\varepsilon) + 2f(N,\varepsilon) \right] + 2^{4N} \left[ f^2(N,\varepsilon) + 2f(N,\varepsilon) \right]^2$ such that by setting $q = \left( \frac{1}{\sqrt{2}} + \sqrt{\delta} \right)$, the critical detection efficiency for the $q$-penalized $N$-product BB84 inequality \eqref{BB84Nqgame} in $B_1$ is given by
\begin{equation} \label{BB84maxJMRobust}
    \eta^* = \frac{1}{2^N} \frac{ 1 - \frac{1}{\sqrt{2}} }{ 1 - \frac{1}{\sqrt{2}} - \sqrt{ \delta } }
\end{equation}
assuming the only imperfection is the detection efficiency. Assuming that the imperfection $\varepsilon$ translates linearly from the CHSH setting to the BB84 setting (e.g.~$\varepsilon$ is the result of white noise on the state), the critical detection efficiency changes to
\begin{equation} \label{BB84maxJMRobust1}
    \eta^* = \frac{1}{2^N} \frac{ 1 - \frac{1}{\sqrt{2}} }{ 1 - \frac{1}{\sqrt{2}} - \frac{\varepsilon}{\alpha^N} - \sqrt{ \delta } }.
\end{equation}
\end{prop}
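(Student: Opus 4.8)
The plan is to follow the proof of Proposition~\ref{BB84prop} step by step, but feed in the robust self-testing statement \eqref{robustSelfTestCHSHN} in place of the exact one, propagate the resulting errors, and collect them all into the single parameter $\delta$. That proof uses only two consequences of exact self-testing: the uniform marginals \eqref{uniformMarg} and the remote-state characterization \eqref{ImpImplication}, so first I would establish robust versions of both. Summing \eqref{robustSelfTestCHSHN} over the unused outcome strings (each sum over an $N$-bit string costing a factor $2^N$ by the triangle inequality) and using $\|(A^{\bm x}_{\bm a}\otimes\I)\ket{\phi^+}^{\otimes N}\|=2^{-N/2}$, one finds that $|p_A(\bm a|\bm x)-2^{-N}|$ is of order $f(N,\varepsilon)$, and, retracing the chain \eqref{ImpImplication} with each exact substitution replaced by $\|\ketbra vv-\ketbra ww\|_1\le\|\ket v-\ket w\|\,(\|\ket v\|+\|\ket w\|)$ applied to the relevant purified vectors, that $V_{\tilde B}\tilde\rho^{(\tilde B)}_{\bm a|\bm x}V_{\tilde B}^\dagger=A^{\bm x}_{\bm a}\otimes\sigma_{B'}+\Delta_{\bm a|\bm x}$ with $\|\Delta_{\bm a|\bm x}\|_1$ bounded in terms of $f(N,\varepsilon)$; both errors vanish as $f\to0$.

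Substituting these into \eqref{BB84Nqgame2JM} turns the exact identity \eqref{BB84Nqgame3JM} into $\mathcal{B}^N_{AB_1}(q)=\tfrac{1}{2^{2N}}\sum_{\vec{\bm b}}\tr[E'_{\vec{\bm b}}(C_{\vec{\bm b}}\otimes\sigma_{B'})]+(\text{error})$, with the $C_{\vec{\bm b}}$ of \eqref{CoperatorsMain} and $\sum_{\vec{\bm b}}E'_{\vec{\bm b}}\preceq\I_{BB'}$. Since I will take $q=\tfrac1{\sqrt2}+\sqrt\delta\ge\tfrac1{\sqrt2}$, the eigenvalue bound proven inside the proof of Proposition~\ref{BB84prop} still gives $\|C_{\vec{\bm b}}\|\le1-q$, so the first term is $\le(1-q)/2^N=(1-\tfrac1{\sqrt2}-\sqrt\delta)/2^N$ exactly as in \eqref{BB84Nqgame4JM}. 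The goal is then to show $(\text{error})\le\sqrt\delta/2^N$, so that the joint-measurability bound reads $\mathcal{B}^N_{AB_1}(q)\underset{\text{JM}(B_1)}{\le}\tfrac1{2^N}(1-\tfrac1{\sqrt2})$, the robust replacement of \eqref{BB84maxJM}.

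The hard part is precisely this error estimate. A naive term-by-term bound summed over the $(2^N+1)^{2^N}$ parent outcomes $\vec{\bm b}$ blows up, and the $\Delta_{\bm a|\bm x}$ live on $\cH_B\otimes\cH_{B'}$ with $\cH_{B'}$ of uncontrolled dimension, so $\tr[E'_{\vec{\bm b}}\Delta]$ cannot be bounded crudely either. I would resolve both by reorganizing the sum over $\vec{\bm b}$ into a sum over the setting $\bm y$ and the marginal outcome $\bm c=\bm b_{\bm y}$: the operators $G^{\bm y}_{\bm c}:=\sum_{\vec{\bm b}:\,\bm b_{\bm y}=\bm c}E'_{\vec{\bm b}}$ form a sub-normalized POVM for each $\bm y$, i.e.\ $\sum_{\bm c}G^{\bm y}_{\bm c}\preceq\I_{BB'}$, so that $\sum_{\bm c}|\tr[G^{\bm y}_{\bm c}\Delta]|\le\|\Delta\|_1$ for $\bm c$-independent error pieces (such as the one from $\tilde\rho_{\tB}$) and $\le2^N\max_{\bm c}\|\Delta_{\bm c|\bm y}\|_1$ for the $\bm c$-dependent ones, keeping everything finite and $\cH_{B'}$-independent. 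Tracking the powers of $2^N$ through this reorganization together with those from the robust implications above yields an $(\text{error})$ bound of the stated form; setting $\delta$ equal to it completes the joint-measurability side. I expect the bookkeeping of these $2^N$-factors, and arranging the estimates so that the bound stays linear-plus-quadratic in $f$ rather than acquiring spurious powers, to be the only genuine difficulty.

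Finally I would compare with the honest score. If the only imperfection is $B_1$'s efficiency $\eta$, the $r\text{BB84}^N$ strategy still attains $\mathcal{B}^N_{AB_1}(q)=(1-q)\eta=(1-\tfrac1{\sqrt2}-\sqrt\delta)\eta$ as in \eqref{BB84idealScore}; requiring this to exceed the joint-measurability bound $\tfrac1{2^N}(1-\tfrac1{\sqrt2})$ and solving for $\eta$ gives exactly \eqref{BB84maxJMRobust}. For \eqref{BB84maxJMRobust1} I would instead assume the noise degrades the honest $B_1$ correlations proportionally, so that at unit efficiency the BB84 value is $1-\varepsilon/\alpha^N$ (the white-noise example being the motivation), making the penalized honest score $(1-\varepsilon/\alpha^N-q)\eta$; the same crossover computation then yields \eqref{BB84maxJMRobust1}. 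It should also be noted that $\delta>0$ whenever $f(N,\varepsilon)>0$, and that both formulas are meaningful precisely when $\delta$ is small enough that $q\le1$ and the denominators stay positive.
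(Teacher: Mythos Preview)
Your approach is correct but takes a genuinely different route from the paper. You separate the JM functional into an ideal piece plus an error, reuse the eigenvalue bound $\|C_{\vec{\bm b}}\|\le 1-q$ from Proposition~\ref{BB84prop} verbatim for the ideal piece, and control the error by collapsing the sum over parent outcomes $\vec{\bm b}$ into marginal sub-POVMs $G^{\bm y}_{\bm c}$. The paper never separates: it observes that the perturbed steered operators $V_{\tilde B}\,\tilde\rho^{(\tilde B)}_{\bm a|\bm x}\,V_{\tilde B}^\dagger = S_j\otimes\sigma + \eta_j$ are automatically positive semidefinite (being isometric images of positive operators) and applies the Popovici--Sebesty\'en bound directly to \emph{these} perturbed operators. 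The Gram entries $\bigl\|\sqrt{S_i\otimes\sigma+\eta_i}\,\sqrt{S_j\otimes\sigma+\eta_j}\bigr\|$ are then bounded via the Frobenius norm and H\"older's inequality, together with a short lemma showing that the operator norm does not increase under partial trace of a rank-one operator; the perturbation enters additively in the resulting $k\times k$ matrix, whose norm is computed exactly and yields the linear-plus-quadratic shape of the $\delta$ bound. Your route is more modular---it recycles Proposition~\ref{BB84prop} as a black box and needs no Frobenius or partial-trace machinery---whereas the paper's route avoids your POVM reorganisation and produces the stated $\delta$ bound directly; your bookkeeping would naturally deliver only the quadratic contribution $2^{4N}[f^2(N,\varepsilon)+2f(N,\varepsilon)]^2$, which nonetheless sits under the proposition's bound and therefore still recovers the statement.
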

\begin{proof}
    The proof utilizes the Popovici--Sebesty\'en bound \cite{SEBESTYÉN_POPOVICI_2005} and has been deferred to Appendix \ref{app:robustness} for brevity. 
\end{proof}
We expect a similar robustness statement to hold for the $q$-penalized $N$-product CHSH inequality, but the details are left for a future study.

\section{Conclusions}
The combined effect of inevitable transmission losses and the limited efficiencies of detectors constitutes one of the most persistent challenges in the practical implementation of device-independent (DI) quantum cryptography based on Bell nonlocality. To address this issue, we integrate two of the most prominent recently proposed approaches: parallel repetition \cite{miklin2022exponentially} and routed Bell experiments \cite{chaturvedi2024extending}. Specifically, we consider the $N$-fold parallel repeated versions of the simple $r\text{BB84}$ and $r\text{CHSH}$ strategies, termed $r\text{BB84}^N$ and $r\text{CHSH}^N$ for routed Bell experiments. We demonstrate that when the source and the devices close to the source $A$ and $B_0$ are perfect, the critical detection efficiency required for the certification of NJM measurements in $B_1$ decreases exponentially following $\eta^*=\frac{1}{2^N}$. Since the number of inputs for $B_1$ in $r\text{BB84}^N$ and $r\text{CHSH}^N$ is $2^N$, the threshold $\eta^*=\frac{1}{2^N}$ is tight. 

We demonstrate the robustness of these results in two ways. First, we analyze a realistic configuration where devices $A$ and $B_0$, along with the switch, reside in a ``central hub" equipped with state-of-the-art technology, allowing us to assume near-perfect performance for these components. Under this assumption, we observe that the parallel repeated strategies $r\text{BB84}^2$ and $r\text{CHSH}^2$ exhibit significantly enhanced robustness to imperfections in the source compared to their single-copy counterparts $r\text{BB84}$ and $r\text{CHSH}$. Second, we provide an analytical argument showing that the robustness of self-testing statements for the $N$-product CHSH inequality carries over to the robustness of our results for the $r\text{BB84}^N$ strategy, which we also expect to hold for the $r\text{CHSH}^N$ strategy.  

A remarkable feat of the $r\text{BB84}$ and $r\text{CHSH}$ strategies is that they enable secure DIQKD at $\eta^*\approx 1/2$ for $B_1$ \cite{roydeloison2024deviceindependent}. The $r\text{BB84}^N$ and $r\text{CHSH}^N$ strategies, therefore, have the potential to facilitate  noise-robust DIQKD with $\eta^*\approx 1/2^N$ for $B_1$ and a higher key-rate because of more number of outcomes. Another intriguing direction is exploring whether other self-testing statements can be utilized to find better strategies for routed Bell experiments. Our proof technique could help us answer this question.  


\section*{Note Added}
While finalizing this manuscript, we became aware of a recent preprint~\cite{sekatski2025certificationquantumcorrelationsdiqkd}. That work demonstrates that if devices $A$ and $B_0$ achieve the maximal quantum violation of the CHSH inequality \eqref{TsirelsonsB} using the strategy \eqref{optMeasCHSH}, and $B_1$ performs $n$ dichotomic measurements on the $X\!-\!Z$ plane of the Bloch sphere, then for $B_1$, $\eta^* = \frac{1}{n}$ for both certifying non-jointly measurable (NJM) measurements in $B_1$ and achieving secure DIQKD. Because these protocols rely only on qubits, they are notably simpler than $r\text{BB84}^N$ and $r\text{CHSH}^N$, which require local dimension $2^N$. 

We focus on a specific example of such a qubit strategy, referred to as $r(\text{BB84}+\text{CHSH})$. In this strategy, the source distributes the maximally entangled state $\ket{\phi^+}_{AB}$ and devices $A$ and $B_0$ act as in the $r\text{BB84}$ and $r\text{CHSH}$ protocols, while $B_1$ measures $\{\{B^{y,1}_b\}_{b\in\{0,1\}}\}_{y\in\{0,1,2,3\}}$—four dichotomic observables constructed by “combining” the measurements of $A$ and $B_0$. Concretely,
\begin{equation} \label{r(BB84+CHSH)}
    B^{y,1}_b 
= 
\begin{cases}
  A^{y}_b, & \text{if } y \in \{0,1\},\\
  B^{y-2}_b, & \text{if } y \in \{2,3\},
\end{cases}
\end{equation}
where $A^{\bm{x}}_{\bm{a}}$ and $B^{\bm{y}}_{\bm{b}}$ are defined in \eqref{optMeasCHSH}.

Using level-$2 + AAB_1$ of the NPA hierarchy, we compute $\eta^*$ for the certification NJM measurements in $B_1$ as a function of the source visibility $v$ (code available online \cite{chaturvedi2025parrelreproutedbell}). The results plotted in Fig. \ref{EtaVsVisibility} show that while $r(\text{BB84}+\text{CHSH})$ attains $\eta^* \approx \frac{1}{4}$ when $v=1$ (just like $r\text{BB84}^2$ and $r\text{CHSH}^2$), its performance degrades faster for $v < 1$. In other words, the higher-dimensional protocols $r\text{BB84}^2$ and $r\text{CHSH}^2$ exhibit lower critical detection efficiencies $\eta^*$ for the certification NJM measurements in $B_1$ under non-ideal visibilities, thus highlighting the advantage of higher dimensional strategies in routed Bell experiments. We view this as a strong motivation for further investigation into strategies for routed Bell experiments built on higher-dimensional self-testing statements.
\section{Acknowledgements}
We thank Edwin Peter Lobo for pointing the $r(\text{BB84}+\text{CHSH})$ strategy \eqref{r(BB84+CHSH)}. We thank Edwin Peter Lobo, Nikolai Miklin, Stefano Pironio, Giuseppe Viola, Ekta Panwar, Jan-Åke Larsson, and Marek Żukowski for insightful discussions. 
AC acknowledges financial support by NCN grant SONATINA 6 (contract No. UMO-2022/44/C/ST2/00081).  
The numerical optimization was carried out using \href{https://ncpol2sdpa.readthedocs.io/en/stable/index.html}{Ncpol2sdpa} \cite{wittek2015algorithm}, \href{https://yalmip.github.io/}{YALMIP} \cite{Lofberg2004}, and \href{https://www.mosek.com/documentation/}{MOSEK} \cite{mosek}. This work was partially supported by the Foundation for Polish Science (IRAP project, ICTQT, contract No. MAB/218/5, co-financed by EU within the Smart Growth Operational Programme).

\bibliographystyle{apsrev4-2}
\bibliography{routed_parallel_CHSH}
\clearpage
\onecolumngrid
\appendix
\section{Element-wise inequality implies operator norm inequality}\label{app:elementwise}
\begin{lem}\label{lem:elementwise}
Let $A,B \in M_n(\mathbb{C})$ such that $0 \le A_{ij} \le B_{ij}$ for all $i,j$. Then $\norm{A} \le \norm{B}$.
\end{lem}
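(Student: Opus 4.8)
The plan is to reduce the operator-norm inequality to an elementary estimate on a bilinear form. First I would recall the variational characterization valid for any $M \in M_n(\mathbb{C})$,
\[
\norm{M} \;=\; \sup_{\norm{x}=\norm{y}=1}\,\abs[\big]{\langle y, Mx\rangle}
\;=\; \sup_{\norm{x}=\norm{y}=1}\,\abs[\Big]{\textstyle\sum_{i,j}\bar y_i\,M_{ij}\,x_j},
\]
where the suprema run over unit vectors $x,y \in \mathbb{C}^n$ (the inner supremum over $y$ being attained at $y = Mx/\norm{Mx}$).

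The key step is then to fix unit vectors $x,y$, observe that the hypothesis $0 \le A_{ij} \le B_{ij}$ forces every $A_{ij}$ and $B_{ij}$ to be a non-negative real number, and apply the triangle inequality entrywise followed by the entrywise domination $A_{ij} \le B_{ij}$:
\[
\abs[\Big]{\textstyle\sum_{i,j}\bar y_i\,A_{ij}\,x_j}
\;\le\; \sum_{i,j}\abs{y_i}\,A_{ij}\,\abs{x_j}
\;\le\; \sum_{i,j}\abs{y_i}\,B_{ij}\,\abs{x_j}
\;=\; \langle \abs{y},\,B\,\abs{x}\rangle ,
\]
where $\abs{x}=(\abs{x_j})_j$ and $\abs{y}=(\abs{y_i})_i$ are again unit vectors (now with non-negative entries). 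Since $B$ is real, the right-hand side is bounded by $\norm{B}\,\norm{\abs{y}}\,\norm{\abs{x}} = \norm{B}$ by Cauchy--Schwarz, and taking the supremum over all unit $x,y$ on the left yields $\norm{A} \le \norm{B}$.

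I do not expect a genuine obstacle here: the argument is essentially the three displayed lines. The one point meriting a word of care is that the ordering hypothesis implicitly asserts that the entries of $A$ and $B$ are real and non-negative, which is exactly what licenses replacing $\bar y_i A_{ij} x_j$ by $\abs{y_i}\,A_{ij}\,\abs{x_j}$ and then passing monotonically to $B$. An equivalent packaging would invoke the general bound $\norm{A} \le \norm{\,|A|\,}$ for the entrywise modulus together with Perron--Frobenius monotonicity of the spectral radius under entrywise domination of non-negative matrices (applied to $|A|^{\top}|A|$), but the direct bilinear-form estimate above is shorter and fully self-contained.
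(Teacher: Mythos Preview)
Your argument is correct. The paper's proof follows the same overall idea---reduce to vectors with non-negative entries, then use entrywise monotonicity---but packages it differently: it works with the one-vector characterization $\norm{A}=\max_{\norm{x}=1}\norm{Ax}$, first shows $\norm{Ax}\le\norm{A\,|x|}$ by a term-by-term estimate on $\abs{\sum_j A_{ij}x_j}^2$, and then, at the optimizer $x^*$ with $x^*_j\ge 0$, expands $\bigl(\sum_j A_{ij}x^*_j\bigr)^2$ via the multinomial theorem to pass from $A$ to $B$. Your bilinear-form route $\norm{M}=\sup_{\norm{x}=\norm{y}=1}\abs{\langle y,Mx\rangle}$ is shorter: the triangle inequality and the single monotone step $A_{ij}\le B_{ij}$ do all the work in one display, with no squaring or multinomial expansion needed. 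The paper's version has the minor byproduct of exhibiting explicitly that the optimizer for a non-negative matrix can be taken entrywise non-negative, but for the lemma as stated your approach is more direct and equally rigorous.
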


\begin{proof}
By definition,
\begin{equation}
\norm{A} = \max_{\substack{x \in \mathbb{C}^n \\ \norm{x} = 1}} \norm{Ax} = \max_{\substack{x \in \mathbb{C}^n \\ \norm{x} = 1}} \sqrt{ \sum_i \abs{ \sum_j A_{ij} x_j }^2 }
\end{equation}

We first notice that the maximum is attained at a vector $x$ such the all its elements are real and non-negative, $x_j \ge 0$. To see this, consider a single term in the summation over $i$:
\begin{equation}
\begin{split}
\abs{ \sum_j A_{ij} x_j }^2 & \left. = \sum_{j,k} A_{ij} A_{ik} \bar{x}_j x_k = \abs{ \sum_{j,k} A_{ij} A_{ik} \bar{x}_j x_k } \le \sum_{j,k} \abs{ A_{ij} A_{ik} \bar{x}_j x_k } \right. \\
& \left. = \sum_{j,k}  A_{ij} A_{ik} \abs{ \bar{x}_j x_k } \le \sum_{j,k}  A_{ij} A_{ik} \abs{ x_j } \abs{ x_k } = \abs{ \sum_j A_{ij} \abs{x_j} }^2,
\right.
\end{split}
\end{equation}
where we used that $A_{ij} \ge 0$. Therefore, we have $\norm{Ax} \le \norm{Ax'}$, where $x'_j = \abs{x_j} \ge 0$. We can then write the norm of $A$ as
\begin{equation}
\norm{A} = \max_{\substack{x \in \mathbb{C}^n \\ \norm{x} = 1 \\ x_j \ge 0}} \sqrt{ \sum_i \left( \sum_j A_{ij} x_j \right)^2 }.
\end{equation}
Let us denote by $x^*$ a vector reaching the maximum in the above formulation of $\norm{A}$. Using the multinomial theorem, we then have
\begin{equation}
\begin{split}
\left( \sum_j A_{ij} x^*_j \right)^2 & \left. = \sum_{\substack{ k_1, k_2, \ldots, k_n \ge 0 \\ k_1 + k_2 + \cdots k_n = 2}} \binom{2}{k_1,k_2, \ldots, k_n} (A_{i1} x^*_1)^{k_1} (A_{i2} x^*_2)^{k_2} \cdots (A_{in} x^*_n)^{k_n} \right. \\
& \left. \le
\sum_{\substack{ k_1, k_2, \ldots, k_n \ge 0 \\ k_1 + k_2 + \cdots k_n = 2}} \binom{2}{k_1,k_2, \ldots, k_n} (B_{i1} x^*_1)^{k_1} (B_{i2} x^*_2)^{k_2} \cdots (B_{in} x^*_n)^{k_n} = \left( \sum_j B_{ij} x^*_j \right)^2,
\right.
\end{split}
\end{equation}
where we also used that $x^*_j \ge 0$ and $0 \le A_{ij} \le B_{ij}$. This inequality then implies that $\norm{A} \le \norm{B}$, as desired. 
\end{proof}

\section{Proof of Proposition \ref{CHSHProp}}

Let us begin by rewriting the $q$-penalized $N$-product CHSH inequality \eqref{CHSHNq} with jointly-measurable (JM) measurements for $B_1$ using \eqref{JM} as,
\begin{align}\label{CHSHNqgame1JM} 
    \mathcal{C}^{ N}_{A,B_1}(q)&= \frac{1}{2^{2N}}\sum_{\bm{y},\bm{b}\in\{0,1\}^N}\left(\sum_{\bm{x}}p(\bm{b}\oplus \bm{x}\cdot \bm{y},\bm{b}|\bm{x},\bm{y},1) -2^Nqp_{B}(\bm{b}|\bm{y},1)\right)  \\ \nonumber
    &=\frac{1}{2^{2N}}\sum_{\bm{y},\bm{b}\in\{0,1\}^N}\sum_{\vec{\bm{b}}}\delta_{\bm{b}_{\bm{y}},\bm{b}}\left\{\sum_{\bm{x}}\tr(\tilde{\rho}_{AB}\tilde{A}^{\bm{x}}_{\bm{\bm{b}\oplus \bm{x}\cdot \bm{y}}}\otimes E_{\vec{\bm{b}}}) -2^Nq\tr(\tilde{\rho}_{B} E_{\vec{\bm{b}}})\right\}
\end{align}
where $\bm{b}\oplus \bm{x}\cdot \bm{y}\equiv (b_j\oplus x_j\cdot y_j)_{j=1}^N$, the POVM $\{E_{\vec{\bm{b}}}\}$ describes
the parent measurement with outcomes $\vec{\bm{b}}\equiv(\bm{b}_{\bm{y}}\in\{0,1\}^N\cup \{\emptyset\})_{\bm{y}}$, and $\tilde{\rho}_{\tilde{A}\tilde{B} }$, $\{\tilde{A}^{\bm{x}}_{\bm{a}=\bm{b}\oplus \bm{x}\cdot \bm{y}}\}$ are the arbitrary shared state and Alice's measurement operators with $\tilde{\rho}_{\tilde{B}}=\tr_A(\tilde{\rho}_{\tilde{A}\tilde{B} })$ being Bob's marginal state. 

Next, we express \eqref{CHSHNqgame1JM} in terms of Alice's marginal probabilities $p_A(\bm{a}|\bm{x})=\tr(\tilde{\rho}_{AB}\tilde{A}^{\bm{x}}_{\bm{a}}\otimes \mathbb{I}_{\tilde{B}})$ of obtaining outcome $\bm{a}$ given input $\bm{x}$ and the corresponding remotely prepared states of Bob, $\tilde{\rho}^{(\tilde{B})}_{\bm{a}|\bm{x}}=\frac{\tr_A(\tilde{\rho}_{\tilde{A}\tilde{B}}\tilde{A}^{\bm{x}}_{\bm{a}}\otimes \mathbb{I}_{\tilde{B}})}{p_A(\bm{a}|\bm{x})}$:

\begin{align} \label{CHSHNqgame2JM} 
    \mathcal{C}^{ N}_{A,B_1}(q)=\frac{1}{2^{2N}}\sum_{\vec{\bm{b}}}\tr\left\{E_{\vec{\bm{b}}}\sum_{\bm{y}|\bm{b}_{\bm{y}}\neq \emptyset}\left(\sum_{\bm{x}}p_A(\bm{b}_{\bm{y}}\oplus \bm{x}\cdot \bm{y}|\bm{x})\tilde{\rho}^{(\tilde{B})}_{\bm{b}_{\bm{y}}\oplus \bm{x}\cdot \bm{y}|\bm{x}}-2^Nq\tilde{\rho}_{\tilde{B}}\right)
    \right\}.
\end{align}
Our aim is to upper bound the maximum value of \eqref{CHSHNqgame2JM} given that the devices $A$ and $B_0$ witness the maximal quantum violation of the $N$-product CHSH inequality \eqref{CHSHNTB}, which self-tests the optimal quantum strategy \eqref{CHSHNselfTest}. In particular, the self-testing statements \eqref{CHSHNselfTest} imply \eqref{uniformMarg} and \eqref{ImpImplication}, using which in \eqref{CHSHNqgame2JM} we obtain
\begin{align}\label{CHSHNqgame3JM} \nonumber
    \mathcal{C}^{ N}_{A,B_1}(q)&=\frac{1}{2^{2N}}\sum_{\vec{\bm{b}}}\tr\left\{E'_{\vec{\bm{b}}}\sum_{\bm{y}|\bm{b}_{\bm{y}}\neq \emptyset}\left(\frac{1}{2^N}\sum_{\bm{x}}A^{\bm{x}}_{\bm{b}_{\bm{y}}\oplus \bm{x}\cdot \bm{y}} -q\mathbb{I}_{B}\right)\otimes \sigma_{B'}
    \right\} 
\end{align}
where $E'_{\vec{\bm{b}}}$ is defined the same way as for the BB84 case. Our aim now is therefore to upper bound the maximum achievable value of the linear functional $\mathcal{C}^{ N}_{A,B_1}(q)$ over operators $\{E'_{\vec{b}} \succeq 0\}_{\vec{\bm{b}}}$ such that $\sum_{\vec{\bm{b}}}E'_{\vec{\bm{b}}} \preceq \mathbb{I}_{BB'}$. To this end, we define the (Hermitian) operators $C_{\vec{\bm{b}}}$ via
\begin{equation} \label{Coperators}
    C_{\vec{\bm{b}}}= \sum_{\bm{y}|\bm{b}_{\bm{y}}\neq \emptyset} \left( \frac{1}{2^N} \sum_{\bm{x}}A^{\bm{x}}_{\bm{b}_{\bm{y}}\oplus \bm{x}\cdot \bm{y}} -q\mathbb{I}_B \right) = \sum_{\bm{y}|\bm{b}_{\bm{y}}\neq \emptyset}\left( \frac{1}{2^N} \sum_{\bm{x}} \bigotimes^N_{j=1} A^{x_j}_{b_{\bm{y},j}\oplus x_j\cdot y_j} -q\mathbb{I}_B \right),
\end{equation}
where for the second equality we used product structure of Alice's optimal projections \eqref{CHSHNmeas}. Similarly to the case of BB84, if we can bound the maximal eigenvalue of $C_{\vec{\bm{b}}}$ by some $\gamma \in \mathbb{R}$, then we obtain
\begin{align}\label{CHSHNqgame4JM} 
\mathcal{C}^{ N}_{A,B_1}(q) = \frac{1}{2^{2N}} \sum_{\vec{\bm{b}}}\tr\left\{ E'_{\vec{\bm{b}}} (C_{\vec{\bm{b}}} \otimes \sigma_{B'})
 \right\} \leq \frac{\gamma}{2^{N}}.
\end{align}
Hence, our task now is to upper bound the maximum eigenvalue of the operators $\{C_{\vec{\bm{b}}}\}_{\vec{\bm{b}}}$ \eqref{Coperators}. 

In \eqref{Coperators}, each $C_{\vec{\bm{b}}}$ contains a sum over $\bm{y} \in \{ \bm{y} | \bm{b}_{\bm{y}} \neq \emptyset\}$ of sums over $\bm{x} \in \{0,1\}^N$ of tensor products of projections, introducing some complications compared to the $r\text{BB84}^{ N}$ strategy. We now rewrite \eqref{CHSHNqgame4JM} such that each $C_{\vec{\bm{b}}}$ contains a sum over $\bm{y} \in \{\bm{y} | \bm{b}_{\bm{y}}\neq \emptyset\}$ of tensor products of two-dimensional density operators $\pi'_{\mathbf{y},i} \in \{\frac{1}{2}(\ketbra{0}{0}+\ketbra{+}{+}),\frac{1}{2}(\ketbra{0}{0}+\ketbra{-}{-}),\frac{1}{2}(\ketbra{1}{1}+\ketbra{+}{+}),\frac{1}{2}(\ketbra{1}{1}+\ketbra{-}{-})\}$,
\begin{align}
C_{\mathbf{\vec{b}}}& = \sum_{\bm{y}|\bm{b}_{\bm{y}}\neq \emptyset} \left( \frac{1}{2^N} \sum_{\bm{x}}\bigotimes^N_{j=1} A^{x_j}_{b_{\bm{y},j}\oplus x_j\cdot y_j} - q \mathbb{I}_B\right)\\
& = \sum_{\mathbf{y}|\mathbf{b}_\mathbf{y} \neq \emptyset} \left\{\frac{1}{2^N}\sum_{x_0}\ldots\sum_{x_{N-1}} \bigotimes_{j=1}^{N-1} A^{x_j}_{b_{\bm{y},j}\oplus x_j\cdot y_j}\ \otimes \ \left(A^{x_{N}=0}_{b_{\bm{y},N}} + A^{x_N=1}_{b_{\bm{y},j}\oplus y_N}\right) - q \mathbb{I}_B\right\} \\
& = \sum_{\mathbf{y}|\mathbf{b}_\mathbf{y} \neq \emptyset} \left\{\frac{1}{2^{N-1}}\sum_{x_0}\ldots\sum_{x_{N-1}} \bigotimes_{j=1}^{N-1} A^{x_j}_{b_{\bm{y},j}\oplus x_j\cdot y_j}\ \otimes \ \frac{1}{2} \left(\frac{\mathbb{I}_2+(-1)^{b_{\bm{y},N}}Z}{2}
+
\frac{\mathbb{I}_2+(-1)^{b_{\bm{y},N}\oplus y_N}X)}{2}\right) - q \mathbb{I}_B\right\} \\ 
&=\sum_{\mathbf{y}|\mathbf{b}_\mathbf{y} \neq \emptyset} \left( \bigotimes_{j=1}^N \pi'_{\mathbf{y},j} - q \mathbb{I}_B\right) 
\end{align}
Similarly to the BB84 case, let us denote a generic $C_{\vec{\bm{b}}}$ with $k$ click-events by $M_k$:
\begin{equation}
 M_k=  \sum_{\mathbf{y}|\mathbf{b}_\mathbf{y} \neq \emptyset} \left( \bigotimes_{j=1}^N \pi'_{\mathbf{y},j} - q \mathbb{I}\right) 
\end{equation}
Again, the maximum eigenvalue of $M_k$ is given by
\begin{equation}
\lambda_{max}(M_k) = \norm{ \sum_{\mathbf{y}|\mathbf{b}_{\mathbf{y}}\neq \emptyset}\bigotimes^N_{j=1} \pi'_{\bm{y},j} } - qk.
\end{equation}
Using the simplified notation
\begin{equation}
S_{l} \equiv \bigotimes^N_{j=1} \pi'_{l,j} ,
\end{equation}
we again have that for a fixed operator $M_k$, $S_l \neq S_{l'}$ if $l \neq l'$ (because different $S_l$ correspond to different measurement settings, and therefore at least one $\pi'_{l,j} \neq \pi'_{l',j}$).

The norm of the 0-click and 1-click operators turns out to be,
\begin{equation}
    \norm{M_0}=0
\end{equation}
\begin{align} \label{1clicknorm} 
\norm{M_1}=\norm{S_l}-q=\alpha^N-q,
\end{align}
where we used the fact that $ \norm{\pi'_{l,j}} =\norm{\frac{\mathbb{I}_2+(-1)^{a}Z}{2}
+
\frac{\mathbb{I}_2+(-1)^{a'}X}{2}}=\frac{1}{2}\left(1+\frac{1}{\sqrt{2}}\right)=\alpha$ for all $a,a'\in\{0,1\}$. Hence, $\norm{M_0}<\norm{M_1}$ for all $q\in[0,\alpha^N)$.

In what follows, we demonstrate that for any $k>1$ the operator norm of $M_k$ is upper bounded by the norm of the 1-click operator $M_1$ \eqref{1clicknorm} for a range of the penalty parameter $q$. 

For $k>1$, we need to bound $\norm{M_k}=\norm{ S_1 + S_2 + \cdots S_k} -qk$. For this, we use the result by Popovici and Sebesty\'en \cite{SEBESTYÉN_POPOVICI_2005}. For any set of positive semidefinite matrices (such as the $S_l$), we have
\begin{equation}
\norm{ S_1 + S_2 + \cdots S_k} \le \norm{\Gamma},
\end{equation}
where $\Gamma$ is a $k$-by-$k$ matrix with elements,
\begin{equation}
    \Gamma_{ll'} = \norm{ \sqrt{S_l} \sqrt{S_{l'}} }=\norm{ \sqrt{ \bigotimes_{j=1}^N \pi'_{l,j}} \sqrt{ \bigotimes_{j=1}^N 
 \pi'_{l',j}} }= \norm{ \bigotimes_{j=1}^N \sqrt{\pi'_{l,j}} \sqrt{\pi'_{l',j}}} \leq \prod_{j=1}^{N} \norm{ \sqrt{\pi'_{l,j}} \sqrt{\pi'_{l',j}} }
\end{equation}
where for the second equality we have used the fact the square root of the tensor product is the tensor product of the square roots. 

Observe that whenever $l=l'$ we have, 
\begin{equation}
    \norm{\sqrt{\pi'_{l,j}} \sqrt{\pi'_{l,j}} } = \norm{\pi'_{l,j}} = \frac{1}{2}\left(1+\frac{1}{\sqrt{2}}\right) = \alpha.
\end{equation}
However, when $l\neq l'$, $S_l$ and $S_l'$ differ in the type of $\pi_{l,j^\ast}$ on at least one tensor factor $j^\ast$, and for this particular factor, it is easy to verify that,
\begin{equation}
\norm{\sqrt{\pi'_{l,j^\ast}} \sqrt{\pi'_{l',j^\ast}} } \leq     \frac{1}{8} \left(2+\sqrt{2}+\sqrt{4 \sqrt{2}-2}\right) = \beta \approx 0.665813.
\end{equation}
The rest of the norms are bounded by $\alpha$, and therefore in general we have,
\begin{equation}
    \Gamma_{l,l'} \leq\begin{cases}\alpha^N, \ & \text{if} \ l=l', \\
    \alpha^{N-1}\beta, \  & \text{if} \ l\neq l',
    \end{cases}
\end{equation}
This implies that the matrix $\Gamma$ is upper bounded element-wise by the matrix
\begin{equation}
G_k = \begin{pmatrix}
\alpha^N & \alpha^{N-1}\beta & \alpha^{N-1}\beta & \cdots & \alpha^{N-1}\beta \\
\alpha^{N-1}\beta & \alpha^N & \alpha^{N-1}\beta & \cdots & \alpha^{N-1}\beta \\
\vdots & & \ddots & & \vdots \\
\alpha^{N-1}\beta & \alpha^{N-1}\beta & \alpha^{N-1}\beta & \cdots & \alpha^N
\end{pmatrix} = \left( \alpha^N - \alpha^{N-1}\beta \right) \mathbb{I}_k + \alpha^{N-1}\beta\begin{pmatrix}
1 & 1 & 1 & \cdots & 1 \\
1 & 1 & 1 & \cdots & 1 \\
\vdots & & \ddots & & \vdots \\
1 & 1 & 1 & \cdots & 1
\end{pmatrix},
\end{equation}
where $\mathbb{I}_k$ is the identity operator on $\mathbb{C}^k$.
Since every element of $\Gamma$ and $G_k$ is non-negative, the element-wise inequality implies $\norm{\Gamma} \le \norm{G_k}$ (see Lemma \ref{lem:elementwise}) so that, 
\begin{align} \nonumber
\norm{M_k} \leq \norm{\Gamma} - qk& \leq \norm{G_k} - qk = \left( \alpha^N - \alpha^{N-1}\beta \right) + k\alpha^{N-1}\beta - qk, \\ 
&= \alpha^N + (k-1)\alpha^{N-1}\beta - qk.
\end{align}
which implies that for $q\in[{\alpha^{N-1}\beta}, {\alpha^N})$, we have,
\begin{equation} \label{inequality}
    \norm{M_k}\leq \alpha^N - q = \norm{M_1}.
\end{equation}
as desired. Plugging \eqref{inequality} back into \eqref{CHSHNqgame4JM}, we obtain \eqref{CHSHmaxJM}.

\section{Robustness}\label{app:robustness}

Assume that we have robust self-testing statement for $N$ parallel repetitions of CHSH or BB84, parametrised by some $\varepsilon > 0$, the gap between the maximal and actual Bell inequality violation. In particular, assume that we have that for all purifications $\ket{\psi}_{\tilde{A}\tilde{B}P} \in \cH_{\tilde{A}} \otimes \cH_{\tilde{B}} \otimes \cH_P$ of all physical states $\rho_{\tilde{A}\tilde{B}P} \in \mathcal{B}(\mathcal{H}_{\tilde{A}} \otimes \mathcal{H}_{\tilde{B}})$ and physical measurements $(A^{\bm{x}}_{\bm{a}})_{\bm{a}}$ on $\mathcal{H}_{\tilde{A}}$ and $(B^{\bm{y}}_{\bm{b}})_{\bm{b}}$ on $\mathcal{H}_{\tilde{B}}$ that are $\varepsilon$-compatible with the target distribution, there exist isometries $V_{\tilde{A}} : \mathcal{H}_{\tilde{A}} \to \bC^{2^N} \otimes \cH_{A'}$ and $V_{\tilde{B}}: \cH_{\tilde{B}} \to \bC^{2^N} \otimes \cH_{B'}$ such that
\begin{equation}
\norm{ (V_{\tilde{A}} \otimes V_{\tilde{B}} \otimes \I_P)( \tilde{A}^{\bm{x}}_{\bm{a}} \otimes \tilde{B}^{\bm{y}}_{\bm{b}} \otimes \I_P )\ket{\psi}_{\tilde{A}\tilde{B}P}  - ( {A}^{\bm{x}}_{\bm{a}} \otimes \tilde{B}^{\bm{y}}_{\bm{b}} ) \ket{\phi^+}^{\otimes N} \otimes \ket{ \text{junk}}_{A'B'P} } \le f(N,\varepsilon),
\end{equation}
where $\ket{\phi^+}$ is the maximally entangled two-qubit state, ${A}^{\bm{x}}_{\bm{a}}$ and ${B}^{\bm{y}}_{\bm{b}}$ are the target measurements, and $f(N,\varepsilon)$ is some function of $N$ and $\varepsilon$. Summing up over ${\bm{b}}$, this implies that
\begin{equation}
(V_{\tilde{A}} \otimes V_{\tilde{B}} \otimes \I_P) (\tilde{A}^{\bm{x}}_{\bm{a}} \otimes \I_{\tilde{B}} \otimes \I_P) \ket{\psi}_{\tilde{A}\tilde{B}P} = ({A}^{\bm{x}}_{\bm{a}}  \otimes \I_{B} ) \ket{\phi^+}^{\otimes N} \otimes \ket{ \text{junk}}_{A'B'P} + \ket{\chi^{\bm{x}}_{\bm{a}}},
\end{equation}
where $\ket{\chi^{\bm{x}}_{\bm{a}}} \in \cH_{A} \otimes \cH_{B} \otimes \cH_{A'} \otimes \cH_{B'} \otimes \cH_P=\bC^{2^N} \otimes \bC^{2^N} \otimes \cH_{A'} \otimes \cH_{B'} \otimes \cH_P$ and $\norm{ \ket{\chi^{\bm{x}}_{\bm{a}}} } \le f(N,\varepsilon)$. In terms of density operators (dropping the Hilbert space labels), we have
\begin{equation} \label{imp}
\begin{split}
(V_{\tilde{A}} \otimes V_{\tilde{B}} \otimes \I_P) & \left. (\tilde{A}^{\bm{x}}_{\bm{a}} \otimes \I_{\tilde{B}} \otimes \I_P) \ketbra{\psi}{\psi} (\tilde{A}^{\bm{x}}_{\bm{a}} \otimes \I_{\tilde{B}} \otimes \I_P) (V^\dagger_{\tilde{A}} \otimes V^\dagger_{\tilde{B}} \otimes \I_P) \right. \\
& \left. = ({A}^{\bm{x}}_{\bm{a}}  \otimes \I_{B} ) (\ketbra{\phi^+}{\phi^+})^{\otimes N} ({A}^{\bm{x}}_{\bm{a}}  \otimes \I_B ) \otimes \ketbra{ \text{junk} }{ \text{junk} } + \ketbra{\chi^{\bm{x}}_{\bm{a}}}{\chi^{\bm{x}}_{\bm{a}}} \right. \\
& \left. + \ket{\chi^{\bm{x}}_{\bm{a}}}(\bra{\phi^+})^{\otimes N} ({A}^{\bm{x}}_{\bm{a}}  \otimes \I_{B} ) \otimes \bra{ \text{junk} } + ({A}^{\bm{x}}_{\bm{a}}  \otimes \I_{B} )(\ket{\phi^+})^{\otimes N} \otimes \ket{ \text{junk} } \bra{\chi^{\bm{x}}_{\bm{a}}},
\right.
\end{split}
\end{equation}
%
The norm of the second term is bounded by
\begin{equation}
    \norm{\ketbra{\chi^{\bm{x}}_{\bm{a}}}{\chi^{\bm{x}}_{\bm{a}}} }=\norm{ \ket{\chi^{\bm{x}}_{\bm{a}}} }^2 \le f^2(N,\varepsilon)
\end{equation}
The norm of the third term,
\begin{equation}
    \norm{\ket{\chi^{\bm{x}}_{\bm{a}}}(\bra{\phi^+})^{\otimes N} ({A}^{\bm{x}}_{\bm{a}}  \otimes \I_{B} ) \otimes \bra{ \text{junk} }} \leq \norm{ \ket{\chi^{\bm{x}}_{\bm{a}}} }\norm{ (\bra{\phi^+})^{\otimes N} ({A}^{\bm{x}}_{\bm{a}}  \otimes \I_{B} ) \otimes \bra{ \text{junk} } }\leq \norm{\ket{\chi^{\bm{x}}_{\bm{a}}}} \le f(N,\varepsilon), 
\end{equation}
and similarly for the last term,
\begin{equation}
    \norm{(\tilde{A}^{\bm{x}}_{\bm{a}}  \otimes \I_{B} )(\ket{\phi^+})^{\otimes N} \otimes \ket{ \text{junk} } \bra{\chi^{\bm{x}}_{\bm{a}}}}\leq \norm{\ket{\chi^{\bm{x}}_{\bm{a}}}} \le f(N,\varepsilon).
\end{equation}
Therefore, the norm of the sum of the last three terms is bounded by $2f(N,\varepsilon) + f^2(N,\varepsilon)$

Tracing out Alice's $N$ qubits as well as $A'$ and $P$ in Eq.~\eqref{imp}, we get
\begin{equation}\label{eq:Vrho_axV}
V_{\tilde{B}} \tilde{\rho}^{(\tilde{B})}_{\bm{a}|\bm{x}} V^\dagger_{\tilde{B}} = \rho^{(B)}_{\bm{a}|\bm{x}} \otimes \sigma + \eta_{\bm{a}|\bm{x}},
\end{equation}
where
\begin{equation}
\begin{split}
    \tilde{\rho}^{(\tilde{B})}_{\bm{a}|\bm{x}} & \left. =\tr_{A,A',P}[(\tilde{A}^{\bm{x}}_{\bm{a}} \otimes \I_{\tilde{B}} \otimes \I_P) \ketbra{\psi}{\psi} (\tilde{A}^{\bm{x}}_{\bm{a}} \otimes \I_{\tilde{B}} \otimes \I_P)], \right. \\
    \tilde{\rho}_{\bm{a}|\bm{x}} & \left. =\tr_{A,A',P}[({A}^{\bm{x}}_{\bm{a}}  \otimes \I_{B} ) (\ketbra{\phi^+}{\phi^+})^{\otimes N} ({A}^{\bm{x}}_{\bm{a}}  \otimes \I_{B} )], \right. \\
    \sigma & \left. = \tr_{A'P}( \ketbra{ \text{junk} }{\text{junk} } ), \right. \\
\eta_{\bm{a}|\bm{x}} & \left. = \tr_{A,A',P}\big[ \ketbra{\chi^{\bm{x}}_{\bm{a}}}{\chi^{\bm{x}}_{\bm{a}}} \big] + \tr_{A,A',P}\big[ \ket{\chi^{\bm{x}}_{\bm{a}}}(\bra{\phi^+})^{\otimes N} ({A}^{\bm{x}}_{\bm{a}}  \otimes \I_{B} ) \otimes \bra{ \text{junk} } \big] \right. \\
& \left. + \tr_{A,A',P}\big[ ({A}^{\bm{x}}_{\bm{a}}  \otimes \I_{B} )(\ket{\phi^+})^{\otimes N} \otimes \ket{ \text{junk} } \bra{\chi^{\bm{x}}_{\bm{a}}} \big].
\right.
\end{split}
\end{equation}
In the following, we bound the norm of $\eta_{\bm{a}|\bm{x}}$. Notice that $\eta_{\bm{a}|\bm{x}}$ is the sum of three terms, all three of the form $\tr_X( \ketbra{\varphi}{\xi} )$, and we already have bounds on the norms of these $\ketbra{\varphi}{\xi}$. To bound the norms of the partial traces, we employ the following lemma.
\begin{lem}
Let $\ket{\varphi}, \ket{\xi} \in \cH_X \otimes \cH_Y$. Then we have $\norm{ \tr_X( \ketbra{\varphi}{\xi} ) } \le \norm{ \ketbra{\varphi}{\xi}} $.
\end{lem}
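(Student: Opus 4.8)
The plan is to reduce the claim (for the operator norm $\norm{\cdot}$) to the submultiplicativity of the operator norm together with the elementary fact that the operator norm is dominated by the Hilbert--Schmidt norm. First I would fix an orthonormal basis $\{\ket{i}\}_i$ of $\cH_X$ and expand $\ket{\varphi} = \sum_i \ket{i}_X \otimes \ket{\varphi_i}_Y$ and $\ket{\xi} = \sum_i \ket{i}_X \otimes \ket{\xi_i}_Y$ with (not necessarily normalized) vectors $\ket{\varphi_i}, \ket{\xi_i} \in \cH_Y$. A one-line computation gives $\tr_X(\ketbra{\varphi}{\xi}) = \sum_i \ketbra{\varphi_i}{\xi_i}$.

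The key step is to recognize the right-hand side as a product of operators between the two Hilbert spaces: define linear maps $W_\varphi, W_\xi : \cH_X \to \cH_Y$ by $W_\varphi \ket{i} = \ket{\varphi_i}$ and $W_\xi \ket{i} = \ket{\xi_i}$. Then $\sum_i \ketbra{\varphi_i}{\xi_i} = W_\varphi \big( \sum_i \ketbra{i}{i} \big) W_\xi^\dagger = W_\varphi W_\xi^\dagger$, so submultiplicativity yields $\norm{ \tr_X(\ketbra{\varphi}{\xi}) } = \norm{ W_\varphi W_\xi^\dagger } \le \norm{W_\varphi}\,\norm{W_\xi}$.

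It remains to bound each factor. Since the Hilbert--Schmidt norm upper bounds the operator norm and $\tr(W_\varphi^\dagger W_\varphi) = \sum_i \braketq{\varphi_i} = \braketq{\varphi}$ (and likewise for $W_\xi$), I obtain $\norm{W_\varphi} \le \norm{\ket{\varphi}}$ and $\norm{W_\xi} \le \norm{\ket{\xi}}$. Combining this with the identity $\norm{\ketbra{\varphi}{\xi}} = \norm{\ket{\varphi}}\,\norm{\ket{\xi}}$ for the operator norm of a rank-one operator then gives $\norm{\tr_X(\ketbra{\varphi}{\xi})} \le \norm{\ketbra{\varphi}{\xi}}$, as claimed.

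I do not anticipate a genuine obstacle here; the only point worth flagging is that the partial trace is \emph{not} a contraction in operator norm for general operators (for instance $\tr_X(\I_X \otimes \rho) = (\dim\cH_X)\,\rho$), so the proof must exploit the rank-one structure of $\ketbra{\varphi}{\xi}$ in an essential way --- which is precisely the role of the factorization $\tr_X(\ketbra{\varphi}{\xi}) = W_\varphi W_\xi^\dagger$. An equivalent and equally short alternative bypasses $W_\varphi, W_\xi$ altogether: for unit vectors $\ket{a},\ket{b}\in\cH_Y$ one bounds $\abs{\bra{a}\tr_X(\ketbra{\varphi}{\xi})\ket{b}} = \abs{\sum_i \braket{a}{\varphi_i}\braket{\xi_i}{b}}$ by the Cauchy--Schwarz inequality, using $\sum_i \abs{\braket{a}{\varphi_i}}^2 = \norm{(\I_X\otimes\bra{a})\ket{\varphi}}^2 \le \braketq{\varphi}$ and the analogous bound for $\ket{\xi}$, and then takes the supremum over $\ket{a},\ket{b}$.
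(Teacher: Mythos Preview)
Your proof is correct. Both arguments hinge on the same two facts --- that $\norm{\ketbra{\varphi}{\xi}} = \norm{\ket{\varphi}}\,\norm{\ket{\xi}}$ and that the operator norm is dominated by the Hilbert--Schmidt norm --- but they reach the bound by different routes. The paper takes Schmidt decompositions $\ket{\varphi} = \sum_j \lambda_j \ket{e_j}\otimes\ket{f_j}$ and $\ket{\xi} = \sum_k \eta_k \ket{g_k}\otimes\ket{h_k}$, writes out $T = \tr_X(\ketbra{\varphi}{\xi})$ explicitly, and bounds $\norm{T} \le \sqrt{\tr(T^\dagger T)}$ by a direct computation showing $\tr(T^\dagger T) \le \sum_j \lambda_j^2 \sum_k \eta_k^2$. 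You instead factorize $T = W_\varphi W_\xi^\dagger$ with respect to an arbitrary orthonormal basis of $\cH_X$, apply submultiplicativity, and bound each factor by its Hilbert--Schmidt norm. Your factorization is a little slicker: it avoids the Schmidt machinery, makes transparent why the rank-one hypothesis is essential (precisely because it is what permits writing $T$ as a single product $W_\varphi W_\xi^\dagger$), and yields the result in fewer lines. The paper's version is more computational but entirely self-contained. Your Cauchy--Schwarz alternative is also valid and is arguably the most elementary of the three.
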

\begin{proof}
Consider the Schmidt decompositions $\ket{\varphi} = \sum_j \lambda_j \ket{e_j} \otimes \ket{f_j}$ and $\ket{\xi} = \sum_k \eta_k \ket{g_k} \otimes \ket{h_k}$, where $\lambda_j, \eta_k \ge 0$, $\{ \ket{e_j} \}_j$ and $\{ \ket{g_k} \}_k$ are orthonormal sets on $\cH_X$, and $\{ \ket{f_j} \}_j$ and $\{ \ket{h_k} \}_k$ are orthonormal \emph{bases} on $\cH_Y$ (we allow for $\lambda_j =0$ and $\eta_k = 0$ so that $\{ \ket{f_j} \}_j$ and $\{ \ket{h_k} \}_k$ can be complete bases by adequately padding with zero Schmidt coefficients, embedding $\ket{\varphi}$ and $\ket{\xi}$ in a larger Hilbert space if necessary). Then we have
\begin{equation}
\norm{ \ketbra{\varphi}{\xi} } = \sqrt{ \norm{\ketbra{\varphi}{\xi}\ketbra{\xi}{\varphi} } } = \sqrt{ \braket{\xi}{\xi} \norm{ \ketbra{ \varphi }{\varphi} } } = \sqrt{ \braket{\xi}{\xi} \braket{ \varphi}{\varphi}} = \sqrt{ \sum_j \lambda_j^2 \sum_k \eta_k^2 }
\end{equation}
Furthermore,
\begin{equation}
\tr_X( \ketbra{\varphi}{\xi} ) = \tr_X\Big( \sum_{j,k} \lambda_j \eta_k \ketbra{e_j}{g_k} \otimes \ketbra{f_j}{h_k} \Big) = \sum_{j,k} \lambda_j \eta_k \braket{g_k}{e_j} \ketbra{f_j}{h_k}.
\end{equation}
Denoting $T:=\tr_X( \ketbra{\varphi}{\xi})$, we have
\begin{equation}\label{eq:traceTT}
\norm{ \tr_X( \ketbra{\varphi}{\xi} )} = \norm{T} = \sqrt{\norm{T^\dagger T}} \le \sqrt{ \tr( T^\dagger T ) },
\end{equation}
since $T^\dagger T \ge 0$. Explicitly,
\begin{equation}
T^\dagger T = \sum_{j,k,l,m} \lambda_j \eta_k \lambda_l \eta_m \braket{e_j}{g_k} \ketbra{h_k}{f_j} \braket{g_m}{e_l} \ketbra{f_l}{h_m} = \sum_{j,k,m} \lambda^2_j \eta_k \eta_m \braket{e_j}{g_k} \braket{g_m}{e_j} \ketbra{h_k}{h_m}
\end{equation}
and so
\begin{equation}
\tr( T^\dagger T ) = \sum_{j,k} \lambda_j^2 \eta_k^2 \braket{e_j}{g_k} \braket{g_k}{e_j} \le \sum_{j,k} \lambda_j^2 \eta_k^2,
\end{equation}
since the $\ket{e_j}$ and $\ket{g_k}$ vectors are normalised. From Eq.~\eqref{eq:traceTT} we therefore have
\begin{equation}
\norm{ \tr_X( \ketbra{\varphi}{\xi} )} \le \sqrt{ \sum_{j,k} \lambda_j^2 \eta_k^2 } = \norm{ \ketbra{\varphi}{\xi} }
\end{equation}
as desired.
\end{proof}
Putting everything together, we then have

\begin{equation}\label{eq:eta_opnorm_bound}
    \norm{\eta_{\bm{a}|\bm{x}}} \le 2f(N,\varepsilon) + f^2(N,\varepsilon).
\end{equation}
Similarly to the ideal ($\varepsilon = 0$) case, we simplify the notation in Eq.~\eqref{eq:Vrho_axV} and write
\begin{equation}\label{eq:USU}
V \tilde{S}_j V^\dagger = S_j \otimes \sigma + \eta_j,
\end{equation}
where we chose the multi-index $j$ for $a|x$, re-named $\rho_{\bm{a}|\bm{x}} \to S_j$ (similarly for the tilde version) and we also dropped the $B$ index from the isometry. Note that the $\tilde{S}_j$ are not normalised (projective) anymore, which is a slight difference from the notation in the ideal case! In particular, $S_j$ are rank-1 projections multiplied by $\frac{1}{2^{2N}}$.

From Eq.~\eqref{eq:USU}, we have
\begin{equation}
\tilde{S}_j = V^\dagger (S_j \otimes \sigma + \eta_j) V,
\end{equation}
%
%
%
and we want to bound the norm of $\sum_{j=1}^k \tilde{S}_j = \sum_{j=1}^k V^\dagger(S_j \otimes \sigma + \eta_j)V = V^\dagger \Big[ \sum_{j=1}^k (S_j \otimes \sigma + \eta_j) \Big] V$, where $V$ is an isometry. For any operator $M$ and any isometry $V$, we have by the submultiplicativity of the operator norm that $\norm{ V^\dagger M V} \le \norm{V^\dagger} \cdot \norm{M} \cdot \norm{V} = \norm{M}$, since $\norm{V} = \norm{V^\dagger} = 1$ for any isometry. Therefore,
\begin{equation}
\norm{ \sum_{j=1}^k \tilde{S}_j } \le \norm{ \sum_{j=1}^k ({S}_j \otimes \sigma + \eta_j )}.
\end{equation}
Since ${S}_j \otimes \sigma + \eta_j \ge 0$ by Eq.~\eqref{eq:USU}, similarly to the ideal case we can apply the Popovici--Sebesty\'en bound and eventually we need to bound the quantities
\begin{equation}
\begin{split}
& \left. \norm{ \sqrt{ {S}_i \otimes \sigma + \eta_i }\sqrt{ {S}_j \otimes \sigma + \eta_j } } \le \norm{\sqrt{ {S}_i \otimes \sigma + \eta_i }\sqrt{ {S}_j \otimes \sigma + \eta_j } }_F \right. \\
& \left. = \sqrt{ \tr[ ( {S}_i \otimes \sigma + \eta_i ) ({S}_j \otimes \sigma + \eta_j ) ] } = \sqrt{ \tr[ ({S}_i \otimes \sigma) ({S}_j \otimes \sigma) ] + \tr[ ( {S}_i \otimes \sigma ) \eta_j ] + \tr[ \eta_i ( {S}_j \otimes \sigma ) ] + \tr(\eta_i \eta_j) } \right. \\
& \left. \le \sqrt{ \tr( {S}_i {S}_j )  + \tr[ ( {S}_i \otimes \sigma ) \eta_j ] + \tr[ \eta_i ( \tilde{S}_j \otimes \sigma ) ] + \tr(\eta_i \eta_j) }
\right.
\end{split}
\end{equation}
where $\norm{.}_F$ is the Frobenius norm and we used $\tr[ ({S}_i \otimes \sigma) ({S}_j \otimes \sigma) ] = \tr( {S}_i {S}_j ) \tr( \sigma^2) \le \tr( {S}_i {S}_j )$.

By the same arguments as in the ideal case, we have $\tr( {S}_i {S}_i ) = \frac{1}{2^{4N}}$ and $\tr( {S}_i {S}_j ) \le \frac{1}{2^{4N+1}}$ for all $i \neq j$.

The second term can be bounded by
\begin{equation}
\tr[ ( {S}_i \otimes \sigma ) \eta_j ] \le \abs{ \tr[ ( {S}_i \otimes \sigma ) \eta_j ] } \le \norm{{S}_i \otimes \sigma }_1 \norm{ \eta_j} \le \frac{1}{2^{2N}} \norm{ \eta_j },
\end{equation}
where $\norm{.}_1$ is the trace norm and we used the H\"older inequality, and the same bound holds for the third term.

We bound the last term similarly,
\begin{equation}\label{eq:tr_etaeta}
\tr( \eta_i \eta_j ) \le \abs{ \tr( \eta_i \eta_j ) } \le \norm{\eta_i}_1 \norm{\eta_j}.
\end{equation}
Note that $\eta_i$ is of the form
\begin{equation}
\eta_i = \tr_A \left( \ketbra{\chi_i}{\chi_i} + \ketbra{\chi_i}{\psi_i} + \ketbra{\psi_i}{\chi_i} \right),
\end{equation}
where $\ket{\psi_i} = ({A}^{\bm{x}}_{\bm{a}}  \otimes \I_{B} )\ket{\phi^+}^{\otimes N} \otimes \ket{ \text{junk}}$ (remembering that we are using $i$ as a multi-index). Using the fact that the trace norm is decreasing under the partial trace \cite{Rastegin2012} and the triangle inequality, we get
\begin{equation}
\norm{ \eta_i }_1 \le \norm{ \ketbra{\chi}{\chi} }_1 + \norm{ \ketbra{\chi}{\psi_i} }_1 + \norm{ \ketbra{\psi_i}{\chi} }_1.
\end{equation}
Note that the trace norm of an operator $\ketbra{a}{b}$ is given by
\begin{equation}
\norm{ \ketbra{a}{b} }_1 = \tr( \sqrt{ \ket{b}\!\braket{a}{a} \!\bra{b} } ) = \sqrt{\braket{a}{a}} \tr(\frac{ \ketbra{b}{b} }{ \sqrt{ \braket{b}{b} } }) = \sqrt{\braket{a}{a}} \sqrt{\braket{b}{b}} = \norm{ \ket{a} } \cdot \norm{\ket{b}}
\end{equation}
Therefore,
\begin{equation}
\norm{ \eta_i }_1 \le \norm{ \ket{ \chi_i } }^2 + 2 \norm{\ket{\chi_i}} \norm{\ket{\psi_i} } \le f^2(N,\varepsilon) + 2f(N,\varepsilon)
\end{equation}

We had the same bound for the operator norm in Eq.~\eqref{eq:eta_opnorm_bound}, and so  from Eq.~\eqref{eq:tr_etaeta} we get
\begin{equation}
\tr( \eta_i \eta_j ) \le \left[ f^2(N,\varepsilon) + 2f(N,\varepsilon) \right]^2.
\end{equation}

Overall, we obtain
\begin{equation}
\norm{ \sqrt{ {S}_i \otimes \sigma + \eta_i }\sqrt{ {S}_j \otimes \sigma + \eta_j } } \le \frac{1}{2^{2N}} \sqrt{ \frac12 + \delta_{ij} }
\end{equation}
for $i\neq j$, and%
\begin{equation}
\norm{ \sqrt{ {S}_i \otimes \sigma + \eta_i }\sqrt{ {S}_i \otimes \sigma + \eta_i } } \le \frac{1}{2^{2N}} \sqrt{ 1 + \delta_{ii} },
\end{equation}
where
\begin{equation}
\begin{split}\label{eq:delta}
0 \le \delta_{ij} & \left. \le 2^{2N+1} \left[ f^2(N,\varepsilon) + 2f(N,\varepsilon) \right] + 2^{4N} \left[ f^2(N,\varepsilon) + 2f(N,\varepsilon) \right]^2
\right.
\end{split}
\end{equation}
for all $i,j$. Let us denote the largest of these by $\delta \equiv \max_{i,j}\{ \delta_{ij} \}$. Then
\begin{equation}
\norm{ \sqrt{ {S}_i \otimes \sigma + \eta_i }\sqrt{ {S}_j \otimes \sigma + \eta_j } } \le \frac{1}{2^{2N}} \sqrt{ \frac12 + \delta } \le \frac{1}{2^{2N}} \left( \frac{1}{\sqrt{2}} + \sqrt{\delta} \right)
\end{equation}
for $i\neq j$, and%
\begin{equation}
\norm{ \sqrt{ {S}_i \otimes \sigma + \eta_i }\sqrt{ {S}_i \otimes \sigma + \eta_i } } \le \frac{1}{2^{2N}} \sqrt{ 1 + \delta } \le \frac{1}{2^{2N}} \left( 1 + \sqrt{\delta} \right),
\end{equation}
where we used that for $a,b \ge 0$ we have $\sqrt{a+b} \le \sqrt{ a + b + 2\sqrt{a}\sqrt{b} } = \sqrt{ ( \sqrt{a} + \sqrt{b} )^2 } = \sqrt{a} + \sqrt{b}$.

By the same arguments as for the ideal case, the maximal eigenvalue of all possible $C_{\vec{\mathbf{b}}}$ operators is dictated by the norm of the $k$-by-$k$ matrix
\begin{equation}
\begin{split}
F_k & \left. =
\begin{pmatrix} 
    1 + \sqrt{\delta} & \frac{1}{\sqrt{2}} + \sqrt{\delta} & \frac{1}{\sqrt{2}} + \sqrt{\delta} & \cdots & \frac{1}{\sqrt{2}} + \sqrt{\delta} \\
    \frac{1}{\sqrt{2}} + \sqrt{\delta} & 1 + \sqrt{\delta} & \frac{1}{\sqrt{2}} + \sqrt{\delta} & \cdots & \frac{1}{\sqrt{2}} + \sqrt{\delta} \\
    \vdots & & \ddots & & \vdots \\
    \frac{1}{\sqrt{2}} + \sqrt{\delta} & \frac{1}{\sqrt{2}} + \sqrt{\delta} & \frac{1}{\sqrt{2}} + \sqrt{\delta} & \cdots & 1 + \sqrt{\delta}
\end{pmatrix}
=
\begin{pmatrix}
    1 & \frac{1}{\sqrt{2}} & \frac{1}{\sqrt{2}} & \cdots & \frac{1}{\sqrt{2}} \\
    \frac{1}{\sqrt{2}} & 1 & \frac{1}{\sqrt{2}} & \cdots & \frac{1}{\sqrt{2}} \\
    \vdots & & \ddots & & \vdots \\
    \frac{1}{\sqrt{2}} & \frac{1}{\sqrt{2}} & \frac{1}{\sqrt{2}} & \cdots & 1
\end{pmatrix}
+
\begin{pmatrix}
    \sqrt{\delta} & \sqrt{\delta} & \sqrt{\delta} & \cdots & \sqrt{\delta} \\
    \sqrt{\delta} & \sqrt{\delta} & \sqrt{\delta} & \cdots & \sqrt{\delta} \\
    \vdots & & \ddots & & \vdots \\
    \sqrt{\delta} & \sqrt{\delta} & \sqrt{\delta} & \cdots & \sqrt{\delta} \\
\end{pmatrix}
\right. \\
& \left. = 
\left( 1 - \frac{1}{\sqrt{2}} \right) \I_k +
\left( \frac{1}{\sqrt{2}} + \sqrt{\delta} \right)
\begin{pmatrix}
    1 & 1 & 1 & \cdots & 1 \\
    1 & 1 & 1 & \cdots & 1 \\
    \vdots & & \ddots & & \vdots \\
    1 & 1 & 1 & \cdots & 1
\end{pmatrix}
\right.
\end{split}
\end{equation}
maximised over all possible values of $k$. The norm of $F_k$ is
\begin{equation}
\norm{F_k} = \left( 1 - \frac{1}{\sqrt{2}} \right) + k \left( \frac{1}{\sqrt{2}} + \sqrt{\delta} \right)
\end{equation}
Therefore, as in the ideal case, the maximal eigenvalue of all the $C_{\vec{\mathbf{b}}}$ is bounded by the maximum over $k \in \{1, 2, \ldots, 2^N\}$ of
\begin{equation}
f_k \equiv \left( 1 - \frac{1}{\sqrt{2}} \right) + k \left( \frac{1}{\sqrt{2}} + \sqrt{\delta} \right) + -kq.
\end{equation}
For simplicity, we choose $q = \left( \frac{1}{\sqrt{2}} + \sqrt{\delta} \right)$, and therefore
\begin{equation}
f_k = 1- \frac{1}{\sqrt{2}}
\end{equation}
for all $k$, leading to the JM bound
\begin{equation}
    \mathcal{B}^{ N}_{AB_1}(q)\underset{\text{JM}({B_1})} \le \frac{ 1- \frac{1}{\sqrt{2}} }{2^N},
\end{equation}
noting that the inequality to check depends on $q$, and therefore on $N$ and $\varepsilon$ through $\delta$.

To obtain the critical detection efficiency for violating this inequality, we take two approaches. First, an idealized (but somewhat unrealistic) approach that $A$ and $B_1$ reaches the $q$-penalized $N$-product BB84 value of $\mathcal{B}^{ N}_{AB_1}(q)=(1-q)\eta$ with detection efficiency $\eta$. This approach assumes that the only source of imperfection is the detection efficiency, which is not to be expected when the $N$-product CHSH inequality is only violated up to $\varepsilon$. This approach leads to the critical detection efficiency
\begin{equation}
    \eta = \frac{1}{2^N} \frac{ 1 - \frac{1}{\sqrt{2}} }{ 1 - q } = \frac{1}{2^N} \frac{ 1 - \frac{1}{\sqrt{2}} }{ 1 - \frac{1}{\sqrt{2}} - \sqrt{ \delta } } > \frac{1}{2^N},
\end{equation}
since $\delta > 0$. At the same time,
\begin{equation}
    \eta \leq \frac{1}{2^N} \frac{ 1 - \frac{1}{\sqrt{2}} }{ 1 - \frac{1}{\sqrt{2}} - \left[ 2^{2N+1} \left[ f^2(N,\varepsilon) + 2f(N,\varepsilon) \right] + 2^{4N} \left[ f^2(N,\varepsilon) + 2f(N,\varepsilon) \right]^2 \right]^{\frac12} }
\end{equation}
by Eq.~\eqref{eq:delta}.

In the other approach, we assume that $A$ and $B_1$ reaches the $q$-penalized $N$-product BB84 value of $\mathcal{B}^{ N}_{AB_1}(q)=(1-\frac{\varepsilon}{\alpha^N} - q)\eta$ with detection efficiency $\eta$. This is plausible if e.g.~$\varepsilon$ is linearly related to a state visibility parameter. This approach leads to the critical detection efficiency
\begin{equation}
\begin{split}
    \eta & \left. = \frac{1}{2^N} \frac{ 1 - \frac{1}{\sqrt{2}} }{ 1 - q } = \frac{1}{2^N} \frac{ 1 - \frac{1}{\sqrt{2}} }{ 1 - \frac{1}{\sqrt{2}} - \frac{\varepsilon}{\alpha^N} - \sqrt{ \delta } } \right. \\
    & \left. \leq \frac{1}{2^N} \frac{ 1 - \frac{1}{\sqrt{2}} }{ 1 - \frac{1}{\sqrt{2}} - \frac{\varepsilon}{\alpha^N} - \left[ 2^{2N+1} \left[ f^2(N,\varepsilon) + 2f(N,\varepsilon) \right] + 2^{4N} \left[ f^2(N,\varepsilon) + 2f(N,\varepsilon) \right]^2 \right]^{\frac12} }
\right.
\end{split}
\end{equation}
similarly to above.

\end{document}